\newtheorem{theorem}{Theorem}
\DeclareMathAlphabet\mathbfcal{OMS}{cmsy}{b}{n}
\begin{document}
	%
	\title{AI-enabled mm-Waveform Configuration for Autonomous Vehicles with Integrated Communication and Sensing}
	%
	%
	%
	
	\author{Nam~H.~Chu,
		Diep~N.~Nguyen,
		Dinh~Thai~Hoang,	
		Quoc-Viet Pham, \\
		Khoa~T.~Phan, Won-Joo Hwang, and Eryk~Dutkiewicz
		\thanks{Nam~H.~Chu, Diep~N.~Nguyen, Dinh~Thai~Hoang, and Eryk~Dutkiewicz are with the School of Electrical and Data Engineering, University of Technology Sydney, Australia (e-mails: namhoai.chu@student.uts.edu.au, diep.nguyen@uts.edu.au, hoang.dinh@uts.edu.au, and eryk.dutkiewicz@uts.edu.au).}
		\thanks{Quoc-Viet Pham is with the Korean Southeast Center for the 4th Industrial Revolution Leader Education, Pusan National University, Busan 46241, Korea (e-mail: vietpq@pusan.ac.kr).}
		\thanks{Khoa~T.~Phan is with School of Engineering and Mathematical Sciences, Department of Computer Science and Information Technology, La Trobe University, Melbourne, Australia (e-mail: K.Phan@latrobe.edu.au).}
		\thanks{Won-Joo Hwang is with the Department of Biomedical Convergence Engineering, Pusan National University, Yangsan 50612, Korea (e-mail: wjhwang@pusan.ac.kr).}
	}%

	\maketitle
	
	\begin{abstract}
		Integrated Communications and Sensing (ICS) has recently emerged as an enabling technology for ubiquitous sensing and IoT applications. For ICS application to Autonomous Vehicles (AVs),
		optimizing the waveform structure is one of the most challenging tasks due to strong influences between sensing and data communication functions. 
		Specifically, the preamble of a data communication frame is typically leveraged for the sensing function.
		As such, the higher number of preambles in a Coherent Processing Interval (CPI) is, the greater sensing task's performance is.
		In contrast, communication efficiency is inversely proportional to the number of preambles.
		Moreover, surrounding radio environments are usually dynamic with high uncertainties due to their high mobility, making the ICS's waveform optimization problem even more challenging.
		To that end, this paper develops a novel ICS framework established on the Markov decision process and recent advanced techniques in deep reinforcement learning.
		By doing so, without requiring complete knowledge of the surrounding environment in advance, the ICS-AV can adaptively optimize its waveform structure (i.e., number of frames in the CPI) to maximize sensing and data communication performance under the surrounding environment's dynamic and uncertainty. 		
		Extensive simulations show that our proposed approach can improve the joint communication and sensing performance up to 46.26\% compared with other baseline methods.   	
	\end{abstract}
	
	\begin{IEEEkeywords}
		Autonomous vehicles, Internet of Vehicles (IoV) joint communication and sensing, MDP, deep reinforcement learning, waveform structure optimization.
	\end{IEEEkeywords}

	%
	\IEEEpeerreviewmaketitle

	\section{Introduction}
	\label{sec:intro}

	In Autonomous Vehicles (AVs), e.g., self-driving cars and unmanned aerial vehicles, sensing and data communications are two important functions.
		The sensing function enables AVs to detect objects around them and estimate their distance and velocity for safety management (e.g., collision avoidance).
		The data communication function allows AVs to exchange information with other AVs or infrastructure via Internet of vehicles (IoV). 
		For example, they can send/receive safety messages and even their own raw sensing data (e.g., traffic data around the AV) for applications such as transportation safety, transportation monitoring, and user services distributed to the AVs~\cite{papadimitratos_vehicular_2009}. 
		Although automotive sensing and vehicular communication can share many commonalities (e.g., signal processing algorithms and the system architecture~\cite{zhang_an_2021}) they are typically designed and implemented separately.
		As such, communication and sensing functions require separate hardware components operating at different frequency bands that become increasingly expensive and inefficiency because of an ever-growing number of connected devices and services.
		Consequently, this makes the implementation of communication and sensing functions in AVs more costly in hardware, complexity, and radio spectrum resources.

	These challenges can be effectively addressed by combining both communication and sensing functions into an unified system, called Integrated Communications and Sensing (ICS). 
	This system can utilize an existing communication waveform, such as vehicular communication waveforms, e.g., Cooperative Intelligent Transport Systems (C-ITS)~\cite{naik_ieee_2019} and Dedicated Short-Range Communication (DSRC)~\cite{kenney_dedicated_2011}, WiFi waveform, or cellular waveform, to extract sensing information from targets' echoes.
		Note that there is another type of integrated communication and sensing system, where radar waveforms, e.g., Frequency-Modulated Continuous-Wave (FMCW) waveform, can be used to transfer data~\cite{xie2022waveform}.
		However, it cannot provide a high data rate as required by AVs because the communication signal has to be spread to avoid degrading the sensing performance~\cite{kumari_adaptive_2019}.
		By sharing the same hardware and signals, ICS significantly reduces the power consumption, cost, spectrum usage, and system size compared to conventional approaches where sensing and communication functions are implemented separately, making it more applicable to AVs.
		Hereinafter, an AV with ICS capability is named ICS-AV.

	Currently, two standards operating at $5.9$ GHz for vehicular communication networks are C-ITS based on IEEE 802.11bd in Europe~\cite{naik_ieee_2019} and DSRC based on IEEE 802.11p in the U.S.~\cite{kenney_dedicated_2011}.
		Unfortunately, their data rates (i.e., up to 27 Mbps) do not meet the requirements of AVs' applications. 
		For example, precise navigation that needs to download a high definition three-dimension map and raw sensor data exchange between AVs to support fully automated driving may require connections up to a few Gbps~\cite{choi_millimeter_2016}.
		In addition, the performance of communication and sensing in ICS systems operating at sub-6 GHz is limited due to the bandwidth availability~\cite{kumari_adaptive_2019}. 	
		In this context, millimeter wave (mmWave), whose frequency is from $30$ GHz to $300$ GHz, has been emerging as a promising solution to address the above challenges in ICS systems~\cite{cheng2022integrated}.
		First, owing to the high-resolution sensing and small antenna size, mmWave is predominantly utilized for automotive Long-Range Radar (LRR)~\cite{hasch_milimeter_2012}. 	
		Second, an mmWave system, e.g., a wireless local area network (WLAN) operating at the $60$ GHz band, can provide a very high data rate to meet AVs' intensive communication requirements.

	However, several challenges are hindering the applications of  mmWave ICS systems in AVs.
		In particular, unlike the conventional approaches where sensing and communication are separated, the ICS-AV leverages a single waveform for both sensing and communication functions.
		Thus, it needs to jointly optimize these two functions simultaneously to achieve high performance of data communication and sensing for ICS systems.
		In addition, since the ICS operates while ICS-AVs are moving, the surrounding environments of AVs are highly dynamic and uncertain.
		This makes ICS-AVs' performance unstable as mmWave is more severely impacted by wireless environments than those of the sub-6 GHz bands~\cite{rappaport_wideband_2015}. 
		Therefore, the highly dynamic and uncertainty of mmWave ICS's environment is another critical challenge that needs to be addressed.
		To that end, mmWave ICS systems have been demanding effective and flexible solutions that can not only jointly optimize communication and sensing functions but also adaptively handle the highly dynamic and uncertainty of the surrounding environment to best sustain high data rate communication links (given the highly directional mmWave communications) and sensing accuracy, e.g., low target miss-detection probability and estimation error of target's range and velocity.
	
	A few works in the literature have recently studied communication mmWave waveforms for ICS systems~\cite{kumari_investigating_2015, grossi_oppotunistic_2018, kumari_ieee802_2017, muns_beam_2017, dokhanchi_a_2019, kumari_adaptive_2019}.
		In~\cite{kumari_investigating_2015} and \cite{grossi_oppotunistic_2018}, the authors exploit a single IEEE 802.11ad data communication frame to provide the sensing function. 
		Specifically, the authors in~\cite{kumari_investigating_2015} propose to use the preamble of the Single Carrier Physical Layer (SC-PHY) frame in IEEE 802.11ad to extract sensing information. 
		The simulation results show that this approach can achieve a data rate of up to 1 Gbps with high accuracy in target detection and range estimation.
		However, the velocity estimation is poor because the preamble is short.
		Particularly, the proposed approach achieves the desired velocity accuracy (i.e., 0.1~m/s) only when the Signal-to-Noise Ratio (SNR) is high, i.e., greater than $28$~dB. 
		In~\cite{grossi_oppotunistic_2018}, the authors aim to overcome this issue by using the IEEE 802.11ad Control Physical Layer (C-PHY) frame that has a longer preamble than that of IEEE 802.11ad SC-PHY. 
		However, it is still not large enough to improve the velocity estimation, whereas the data rate is only 27.5 Mbps, significantly lower than the desired data rate for AV's communication~\cite{choi_millimeter_2016}.
		These results from the above studies (i.e., \cite{kumari_investigating_2015} and~\cite{grossi_oppotunistic_2018}) suggest that a single frame processing is unable to satisfy the desired velocity estimation accuracy for AVs. 
		
	Multi-frame processing has been recently considered to be a potential solution for ICS systems to improve sensing information extracted from targets' echoes, e.g., \cite{kumari_ieee802_2017, muns_beam_2017, dokhanchi_a_2019, kumari_adaptive_2019}.
		The authors in~\cite{kumari_ieee802_2017} propose velocity estimation algorithms that leverage multiple fixed-size frames based on IEEE 802.11ad SC-PHY in a CPI.
		Their results demonstrate that the proposed solution can achieve the desired velocity accuracy of AVs (i.e., 0.1 m/s~\cite{hasch_milimeter_2012}) when the number of frames is greater than $20$.
		In \cite{muns_beam_2017}, the authors develop a similar multi-frame processing method to embed sensing functionality in IEEE 802.11ad physical layer frame for a Vehicle-to-Infrastructure (V2I) scenario. 
		By doing so, they can reduce the beam training time of 802.11ad up to 83\%.
		Instead of using the 802.11ad standard, the authors in~\cite{dokhanchi_a_2019} propose a ICS waveform based on Orthogonal Frequency-Division Multiple Access (OFDMA) for a bi-static automotive ICS system, in which the sensing area is extended to non-light-of-sight positions by exploiting reflected signals from other obstacles.
		However, in this work, the maximum communication data rate is only up to $0.1$ Mbps which is dwarfed compared to the desired data rate in AVs.

	A common drawback in the above studies (i.e.,~\cite{kumari_ieee802_2017, muns_beam_2017,dokhanchi_a_2019}) is that the waveform structures (e.g., number of frames in CPI) are not optimized.
		Instead, these parameters are manually set.
		In practice, the dynamic and uncertainty of the ICS's environment (e.g., SNR and data arrival rate) can significantly influence the ICS's data transmission rate as well as sensing accuracy in velocity/range estimation.
		Thus, finding the optimal waveform structure according to the surrounding environment and timely adapting the selected structure with the dynamics of the surrounding environment play vital roles.
		To address this problem, in~\cite{kumari_adaptive_2019}, the authors propose an adaptive virtual waveform design for mmWave ICS based on the 802.11ad standard to achieve the optimal waveform structure (i.e., number of frames in CPI) that can balance between communication and sensing performance.
		The results show that given a fixed length of CPI, increasing the number of frames in CPI can increase the sensing performance, but it will degrade the communication performance (i.e., data rate).	
		However, this approach requires complete information about the surrounding environment in advance, which may be impossible to obtain in practice.
		As such, their proposed solution needs to be rerun from scratch if there is any change in the environment.
	
	In addition, none of the above studies (i.e.,~\cite{kumari_investigating_2015, grossi_oppotunistic_2018, kumari_ieee802_2017, muns_beam_2017, dokhanchi_a_2019, kumari_adaptive_2019}) considers the dynamic and uncertainty problem of the information and environment, e.g., the changes of the wireless channel quality and the arrival rate of data that need to be transmitted via ICS.
		This problem is critical to the performance of the ICS system because the surrounding environment consistently changes as the ICS-AV is moving.
		In particular, the rapid change of the wireless channel quality (e.g., SNR) highly impacts the ICS's communication efficiency (i.e., packet loss due to transmission failure) and sensing performance (i.e., target detection and targets' range and speed estimation accuracy).
		The problem is even more critical for mmWave systems that are highly directional and prone to blockages/fading.
		Moreover, the data arrival process at the ICS-AV is often unknown in advance since it varies in different applications (e.g., navigation and automated driving).
		When the data arrival rate at the AV's ICS system is higher than its maximum transmission rate, data starts to pile up in a data queue/buffer.
		Since a data queue/buffer size is always limited, packet loss will occur when the queue is full.  
		This problem can cause serious issues for AVs as they cannot communicate with other AVs and infrastructure.
		Given above, adaptively optimizing the waveform of ICS is an effective approach to not only jointly optimize both sensing and communication performance but also effectively deal with the dynamic and uncertainty of the surrounding environment.
		However, to the best of our knowledge, this approach has not been investigated in the literature.
		
	To fill this gap, this paper aims to propose a novel framework to maximize the performance of an ICS system by adaptively optimizing the waveform structure under the dynamic and uncertainty of the surrounding environment when the AV is moving.
		It is worth noting that since the sensing processing for the 802.11ad-based ICS is well-investigated in~\cite{kumari_investigating_2015, grossi_oppotunistic_2018, kumari_ieee802_2017, muns_beam_2017, kumari_adaptive_2019}, this study only focuses on addressing the waveform structure optimization problem for mmWave ICS AVs under the dynamic and uncertainty of surrounding environments.
		To that end, we first model the problem as a Markov Decision Process (MDP) because it can allow the AV to determine the optimal waveform (e.g., the number of frames in CPI) based on its current observation (e.g., channel state and number of data packets in the data queue).
		Then, we adopt the Q-learning algorithm, which is widely used in Reinforcement Learning (RL) due to its simplicity and convergence guarantee, to help the ICS-AV gradually learn the optimal policy via interactions with the surrounding environment.
		However, Q-learning may face the curse of dimensionality and overestimation problems that lead to a low converge rate and an unstable learning process when the state space is large~\cite{Hasselt2016_DoubleQ}.
		In our case, the state space that consists of all possible observations of the surrounding environment is very large, while ICS-AV requires fast learning to promptly respond to the highly dynamic and uncertainty of the ICS-AV's environment.  	
	Therefore, we develop a highly-effective learning algorithm based on the most recent advances in RL, namely i-ICS, to deal with these problems.
		First, i-ICS addresses the high dimensional state space problem by utilizing a deep neural network (DNN) to estimate the values of states~\cite{Mnih2015Human}. 
		Second, the overestimation is handled by using the deep double Q-learning~\cite{Hasselt2016_DoubleQ}.	
		Finally, the learning process is further stabilized and accelerated by leveraging the dueling neural network architecture that separately and simultaneously estimates the advantage values and state values~\cite{Wang2016Dueling}. 
		Our major contributions are as follows.
		\begin{itemize}
			\item Design a novel framework by which the ICS-AV can dynamically and automatically optimize the waveform structure under the highly dynamic and uncertainty of its surrounding environment to jointly optimize the  communication efficiency and sensing accuracy, thereby maximizing the ICS's performance.
			\item Develop a highly-effective deep RL (DRL) algorithm taking advantages of recent advances in RL, including deep Q-learning, deep double Q-learning, and dueling neural network architecture, that can help the ICS-AV quickly obtain the optimal policy. 
			\item Perform extensive simulations to investigate the effectiveness of our proposed solution under different scenarios and reveal key factors that can significantly influence the performance of the ICS system.
		\end{itemize}
		The rest of this paper is organized as follows. 
		Sections~\ref{sec:Model} and~\ref{sec:formulation} introduce the ICS system model and the problem formulation, respectively.
		Then, the Q-learning-based and the proposed i-ICS algorithms are proposed in Section~\ref{sec:solutions}.
		In Section~\ref{sec:results}, simulation results are analyzed.
		Finally, we conclude our study in Section~\ref{sec:conclusion}.
	\section{System Model}
	\label{sec:Model}
	
	\begin{figure}
		\centering
		\includegraphics[width=0.95\linewidth]{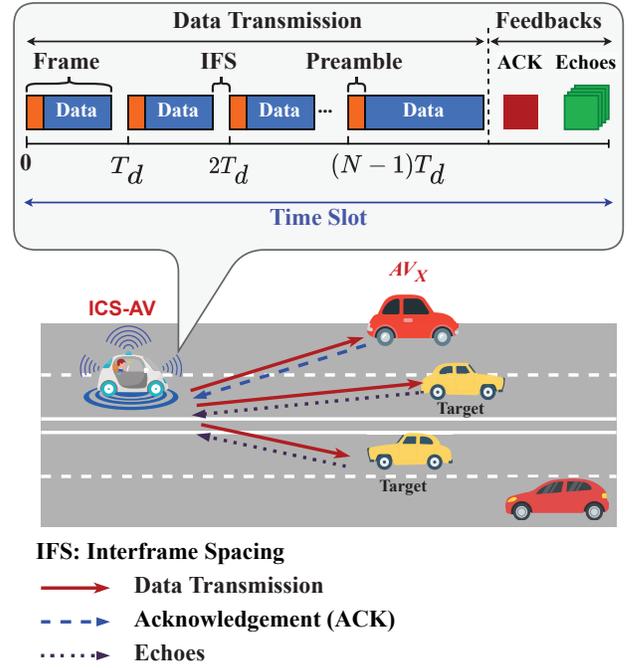}
		\caption{The ICS system model in which the ICS-AV maintains a data communication with $AV_X$ based on IEEE 802.11ad. At the same time, the ICS-AV senses its surrounding environment by utilizing echoes of its transmitted waveforms.}
		\label{fig:system-model}
		\vspace{-10pt}
	\end{figure}
	In this work, we consider an autonomous vehicle, namely ICS-AV, that is equipped with an intelligent millimeter wave integrated communication and sensing (mm-Wave ICS) system based on the IEEE 802.11ad SC-PHY specification.
	Note that this paper leverages the preamble in the SC-PHY for sensing task since it is similar to those in Control Physical Layer (C-PHY) and OFDM Physical Layer (OFDM-PHY), making the proposed solution easily to be extended to other physical layer types in IEEE 802.11ad.
	Moreover, according to~\cite{rohde802.11ad}, the OFDM-PHY is obsolete and may be removed	in a later revision of the IEEE 802.11ad standard.
	The ICS-AV maintains a communication link at a wavelength $\zeta$ with a vehicle $AV_X$, called the recipient vehicle.
	Let $d_X$ and $v_X$ denote the distance and the relative speed between ICS-AV and $AV_X$, respectively.
	At the same time, the ICS-AV gathers echoes of transmitted signals from surrounding targets (e.g., moving vehicles $AV_1, \dots, AV_X$) to perform the sensing function, as depicted in Fig.~\ref{fig:system-model}.	
	In this work, we assume that time is divided into equal slots.
	The time slot is small enough so that the velocities of targets can be considered constant in a time slot~\cite{kumari_ieee802_2017,kumari_adaptive_2019}.	
	At the beginning of a time slot, the ICS-AV decides the mm-Wave ICS waveform structure (i.e., the number of frames in the CPI) that will be used to transmit data in this time slot.
	Then, it observes feedback from the receiving vehicle $AV_X$ (i.e., the acknowledgment of frame) and echoes signals from its surrounding targets at the end of this time slot.
	
	The ICS-AV has a data queue with a maximum of $Q$ packets, each with $B$ Bytes.
	When a new packet arrives, it will be stored in the data queue if this queue is not full. 
	Otherwise, this packet will be dropped. 
	The packet arrival is assumed to follow the Poisson distribution with the mean $\lambda$ packets per time slot.
	Note that the 802.11ad frame has a varying data field; therefore, each frame can contain one or multiple packets.	
	In the following, we first elaborate the proposed ICS transmit and receive signal models, then discuss the sensing processing and ICS performance metrics.
	
	\subsection{Signal Models}
	\subsubsection{Transmitted Signal Model}
	In IEEE 802.11ad, the SC-PHY frame consists of a fixed-size preamble and a varying length data field.
	The preamble contains multiple Golay sequences whose Ambiguity Function (AF) exhibits an ideal auto-correlation without a side-lobe along the zero Doppler axis, making it perfect to be utilized for sensing function, e.g., range estimation and multi-target detection~\cite{ kumari_ieee802_2017}.
	However, its AF is very susceptible to Doppler shifts, leading to a poor velocity estimation.
	To that end, multi-frame processing is proposed to address this problem~\cite{kumari_ieee802_2017, kumari_adaptive_2019}.
	By doing so, the preambles across frames act as radar pulses in a Coherence Processing Interval (CPI).
	This paper considers an ICS waveform structure that consists of $N$ IEEE 802.11ad frames in a CPI.
	The IEEE 802.11ad system can recognize this aggregated frame as the block/no acknowledgment policy~\cite{802.11ad}.
	
	The maximum target's relative velocity, denoted by $v_{max}$, can only be explicitly estimated when frames are located at specific locations in the CPI time~\cite{kumari_adaptive_2019}. 	
	Specifically, the $n$-th frame is located at $nT_d$ (as illustrated in Fig.~\ref{fig:system-model}), where $n \in [0,1,\dots,N-1]$ and $T_d \leq 1/(2\Delta f_{max})$ is a sub Doppler Nyquist sampling interval with a maximum Doppler shift $\Delta f_{max} = 2v_{max}/\zeta$~\cite{richard_fundamentals_2014}.
	Note that the desired ICS system performance can be achieved by optimizing the ICS waveform parameters (e.g., the number of frames in the CPI), which will be described in more details in Section~\ref{sec:model_jrc_metrics}.
	The transmit signal model is then defined as follows. 
	Let $s_n[k]$ denote the symbol sequence corresponding to $n$-th transmitted frame with $K_n$ symbols. 
	Then, the complex-baseband continuous time of transmitted signal in the CPI can be given by~\cite{kumari_adaptive_2019,heath_foundations_2019}:
	\begin{equation}
		\label{eq:transmit_signal}
		x(t) = \sum_{n=0}^{N-1} \sum_{k=0}^{K_n-1} s_n[k]g_{TX}(t-kT_s -nT_d),
	\end{equation}
	where $g_{TX}(t)$ is the unit energy pulse shaping filter at the transmitter of  ICS-AV and $T_s$ is the symbol duration.	
	In this study, similar to ~\cite{kumari_adaptive_2019, va_milimieter_2016}, we consider a single data stream model where the adaptive analog beamforming can be applied to achieve higher directionality beamforming. 
	Thus, the above communication and sensing received signals can be modelled in the following.
	
	\subsubsection{Received Signal Models}
	\paragraph{Data communication received signal}
Suppose that the mmWave communication link between $AV_X$ and ICS-AV is established, the large-scale path loss is defined as follows~\cite{rappaport_wideband_2015}:
		\begin{equation}
			G_c = \frac{G_{TX} G_{RX} \lambda^2}{8\pi^2d_X^2},
		\end{equation}
		where $G_{TX}$ and $G_{RX}$ are the antenna gains of the transmitter (TX) and the receiver (RX), respectively.
		After beamforming, symbol and frequency synchronization phases, the received communication signal is the composition of $P_c$ attenuated and delayed versions of $x(t)$.
		Thus, the received communication signal corresponding to symbol $k$ in the $n$-th frame  can be represented as follows~\cite{kumari_adaptive_2019}:
		\begin{equation}
			\label{eq:comm_signal}
			y^c_n[k] = \sqrt{G_c}\sum_{p=0}^{P_c-1}\beta_c[p] s_n[k-p] + z^c_n[k],
		\end{equation}
		where $z^c_n[k]$ is the complex white Gaussian noise with zero mean and variance
		$\sigma_c^2$, i.e., $\mathcal{N}_C(0,\sigma_c^2)$, and $\beta_c[p]$ is the small-scale complex gain of the $p$-th path.
		Note that we assume that $\beta_c[p]$ is independent and identically distributed (i.i.d) $\mathcal{N}(0,\sigma_p^2)$ where $\sum_{p=0}^{P_c-1}\sigma_p^2=1$, as in~\cite{kumari_adaptive_2019, kumari_ieee802_2017}.
		The SNR of the communication channel is defined as $SNR_c\triangleq E_sG_c/\sigma_c^2$, where $E_s$ is the energy per symbol of the transmitted signal.
	
	\paragraph{Sensing received signal}	
Similar to~\cite{kumari_adaptive_2019, kumari_ieee802_2017, bazzi2012estimation}, this paper uses the scattering center representation to describe the sensing channel.
	We consider that there are $O$ range bins, and at $o$-th range bin there are $B_o$ scattering centers (i.e., targets).
	A scattering center $(d_o, b)$ can be defined by its distance $d_o$, velocity $v_o^b$, radar cross-section $\sigma_o^b$, round-trip delay $\tau_o=d_o/c$ with $c$ being the speed of light, and Doppler shift $\Delta f_o^b = 2v_o^b/\lambda$. 
	The large-scale channel gain corresponding to a scattering center $(d_o, b)$ can be given as follows~\cite{kumari_adaptive_2019, kumari_ieee802_2017, bazzi2012estimation}:
	\begin{equation}
		G_o^b = \frac{G_{TX} G_{RX} \lambda^2\sigma_o^b}{64\pi^3 d_o^4}.
	\end{equation}
	As in\cite{kumari_adaptive_2019, rohling2001waveform}, only a target whose $d_o$ is large in comparison with the distance change during the CPI (i.e., $d_o\geqq v_o^b T_{CPI}$) is considered, so the small-scale channel gain $\beta_o^b$ can be assumed to be constant during the CPI.
	Thus, the received sensing signal model corresponding to symbol $k$ in the $n$-th frame can be given as follows~\cite{kumari_adaptive_2019}:
	\begin{equation}
		y_n[k] = \sum_{o=0}^{O-1} \mathcal{E}_n^o[k] \sum_{b=0}^{B_o-1} \sqrt{G_o^b} e^{-j2\pi\Delta f_o^b(kT_s + nT_d)} + z_n^s[k],
	\end{equation}
	where $z_n^s[k]\sim\mathcal{N}_C(0,\sigma_n^2)$ is the complex white Gaussian noise of the sensing channel, and $\mathcal{E}_n^o[k]$ is the delayed and sampled Matched Filtering (MF) echo from $o$-th range bin, i.e., $\mathcal{E}_n^o[k] = \sum_{i=0}^{K_n-1}s_n[i] g((k-i)T_s - nT_d -\tau_o)$, where $g(t) = g_{TX}(t) * g_{RX}(t)$ is the net TX-RX pulse shaping filter.
	
	In this study, similar to~\cite{kumari_virtual_2018, kumari_adaptive_2019, rohling2001waveform}, we assume that the channel is stationary during the preamble period of a frame due to the small preamble duration.
	As such, the received signal model corresponding to the preamble $\mathcal{E}_t^o[k]$ of a frame can be given as follows~\cite{kumari_adaptive_2019}:
	\begin{equation}
		\label{eq.recieved_preamble}
		y_n^t[k] = \sum_{o=0}^{O-1} \mathcal{E}_t^o[k] \sum_{b=0}^{B_o-1} \sqrt{G_o^b} e^{-j2\pi\Delta f_o^bnT_d} + z_n^s[k].
	\end{equation}
	Note that $kT_s$ can be omitted from the phase shift term in the signal model corresponding to the preamble part since the channel is assumed to be time-invariant within the preamble period.

	\subsection{Sensing Signal Processing}
	We now discuss sensing signal processing in the ICS system.
	Based on the cross-correlation output between the transmitted preamble of a single 802.11ad frame and the received signal, the ICS system can detect a target with high probability (more than 99.99\%) and achieve the desired range resolution (i.e., $0.1$ m~\cite{hasch_milimeter_2012}) for automotive LRR~\cite{kumari_ieee802_2017}.
	However, the AF of IEEE 802.11ad preamble is sensitive to Doppler shift, making it less accurate in velocity estimation.
	Therefore, this work only considers the velocity estimation of the ICS system, which is more challenging to obtain a high accuracy than those of target detection and range estimation processes. 
	
	After detecting targets and obtaining the corresponding range bins, the velocity estimation can be executed as follows.
		Given the $n$-th frame received in~\eqref{eq.recieved_preamble}, the sensing channel corresponding to detected targets at the $o$-th range bin can be expressed as follows~\cite{kumari_adaptive_2019}:
		\begin{equation}
			h_o^n= \sum_{b=0}^{B_o-1}u_o^b e^{-j2\pi\Delta f_o^bnT_d} + z_o^n,
		\end{equation}
		where \mbox{$u_o^b \!=\! \gamma \sqrt{E_sG_o^b}$} is the signal amplitude, $\gamma$ is the correlation integration gain, and $z_o^n$ is the complex white Gaussian noise $\mathcal{N}_C(0,\sigma_n^2)$.
		Then, the channel vector corresponding to the $o$-th range bin for $N$ frames in the CPI, i.e., \mbox{$\mathbf{h}_o \!=\! [h_o^1, h_o^2, \dots, h_o^{N-1}]$}, can be given as follows:
		\begin{equation}
			\label{eq:channel_vector}
			\mathbf{h}_o = \mathbf{D}_o\mathbf{u}_o + \mathbf{z}_o,
		\end{equation}
		where \mbox{$\mathbf{z}_o \!=\! [z_o^0, z_o^1, \dots, z_o^{N-1}]$} is the channel noise vector, \mbox{$\mathbf{u}_o \!\triangleq\! [ u_o^0, u_o^1, \dots, u_o^{B_o-1} ]^T$} denotes the channel signal amplitude vector, \mbox{$\mathbf{d}(v_o^b) \!\triangleq\![1, e^{-j2\pi\Delta f_o^bT_d}, \dots ,\allowbreak, e^{-j2\pi\Delta f_o^b(N-1)T_d} ]^T$} is the vector of channel Doppler corresponding to $b$-th velocity at $o$-th range bin, and $\mathbf{D}_o \!\triangleq\![\mathbf{d}(v_o^0), \mathbf{d}(v_o^1), \dots, \mathbf{d}(v_o^{B_o-1}) ]$ is the matrix of channel Doppler.
		The target velocity can be estimated based on \eqref{eq:channel_vector} by using Fast Fourier transform (FFT)-based algorithms that are widely used in the classical radar processing~\cite{kumari_ieee802_2017, richard_fundamentals_2014}.
	
	\subsection{ICS Performance Metrics}
	\label{sec:model_jrc_metrics}
	As discussed in the previous subsection, this work focuses on the target velocity estimation.
	Thus, the sensing performance for the ICS can be determined by the velocity estimation accuracy (i.e., velocity resolution).
	For the FFT-based velocity estimation approach, the velocity resolution is defined by~\cite{kumari_ieee802_2017}:
	\begin{equation}
		\label{eq:velocity_resolution}
		\Delta_v = \frac{\zeta}{2N_fT_d},
	\end{equation}
	where $N_f$ is the number of frames used for the velocity estimation process.
	Then, the velocity measurement accuracy can be characterized by the root mean square error that depends on the SNR of the received sensing signal as follows~\cite{hasch_milimeter_2012, curry_radar_2004}:
	\begin{equation}
		\label{eq:velocity_accuracy}
		\delta = \frac{\zeta}{2N_fT_d \sqrt{2\mbox{\textit{SNR}}_r}}.
	\end{equation}
	Equation \eqref{eq:velocity_accuracy} implies that given a fixed CPI time, a fixed $T_d$, and a constant data rate, the velocity estimation accuracy increases (i.e., $\delta$ decreases) as the value of $N_f$ increases. 
	Recall that frames are placed at consecutive multiples of $T_d$ in the CPI. 
	Therefore, the last frame's size is larger than those of others if the total number of frames is less than the maximum number of frames in the CPI.   
	As such, as the number of frames in the CPI increases, the size of the last frame decreases since the CPI duration is constant.
	
	The communication metrics for the ICS must represent the data transmission performance.
	Two typical communication metrics are the transmission rate and reliability (e.g., packet loss).
	Thus, this work considers two metrics, 1) the length of the data queue representing the efficiency of data transmission, and 2) the number of dropped packets demonstrating the system reliability.
	
	It is also worth noting that a large frame has a higher drop probability than that of a smaller frame in the same wireless environment.
	Thus, using multiple small-size frames can increase not only the reliability of transmission but also the sensing performance (i.e., the velocity estimation accuracy).
	However, it leads to an overhead because it increases the number of preambles that do not contain user data.
	Consequently, packets pile up at the ICS-AV, leading to packet drop in the queue.	
	In addition, the characteristics of the ICS-AV's surrounding environment (e.g., the wireless channel quality, which can be represented through the packet drop probability and SNR, and the packet arrival rate) highly influence the ICS performance in regard to data transmission reliability and sensing accuracy.	
	Therefore, the ICS system needs to obtain an optimal policy that optimizes the ICS waveform (i.e., number of frames in the CPI) to achieve the desired performance in terms of data transmission and sensing accuracy.
	Furthermore, the ICS-AV's surrounding environment may change significantly from time to time, especially in the ICS environment where AVs usually travel.
	Thus, optimizing the ICS waveform in each time slot is an intractable problem. 
	The following sections will describe our proposed MDP framework for the ICS operation problem that enables the ICS-AV to quickly and effectively learn the optimal policy without requiring complete information from the surrounding environment, thereby achieving the best performance compared with traditional solutions.  
	
\section{Problem Formulation}
	\label{sec:formulation}
	The ICS-AV's operation problem is formulated using the MDP framework to deal with the highly dynamic and uncertain of the surrounding environment.
	The MDP framework can help the ICS-AV adaptively decide the best action (i.e., the ICS waveform structure) based on its current observations (i.e.,  the current data queue length and channel quality) at each time slot to maximize the ICS system performance without requiring complete knowledge on the packet arrival rate, data transmission process, and channel quality in advance.  
	An MDP is generally defined by the state space $\mathcal{S}$, the action space $\mathcal{A}$, and the immediate reward function $r$.
	The following sections will discuss more details about the components of our proposed framework. 
	\subsection{State Space}
	As we aim to maximize the performance of the ICS system with regard to data transmission efficiency and sensing accuracy, we need to consider the following key factors. 
	The first one is the current data queue length (i.e., the number of packets in the data queue) because it reflects the efficiency of the data transmission process. 
	For example, given a packet arrival rate, the lower the number of packets in the queue is, the higher the data transmission efficiency is.
	The second one is the link quality that can be estimated by using the SNR metric at the ICS-AV.
	At the beginning of a time slot, the link quality is estimated based on the feedback (i.e., the recipient vehicle's ACK frame and targets' echoes) of the transmitted frame in the previous time slot.
	Although it does not represent the instantaneous channel state, it help to provide valuable information about the surrounding environment.
	
	In this work, we consider that the channel quality level can be grouped into $C$ different classes, which are analogous to the Modulation and Coding Scheme (MCS) levels in IEEE 802.11ad~\cite{802.11ad}. 
	These classes have different probabilities of bit errors due to the different wireless channel qualities, denoted by a probability vector $\mathbf{p_e}=[p_1,p_2,\dots,p_C]$.
	Note that given the transmission link with bit error probability $p_b$, the error probability of an $F$-bit frame can be calculated by $p_f = 1-(1-p_b)^F$~\cite{khalili_a_new_2005}.
	In addition, if a frame drops, all packets in this frame will be lost.		 
	To that end, the ICS's state space can be given as follows:
	\begin{equation}
		\mathcal{S} = \Big\{(q, c): q \in \{0, \dots, Q \}; c \in \{0, \dots, C\}  \Big\},
	\end{equation}
	where $q$ is the current number of packets in the data queue and $c$ is the channel quality. Here, $Q$ is the maximum number of packets that the data queue can store.
	In this way, the system state can be represented by a tuple $s = (q,c)$.   
	By this design, the ICS system continuously operates without falling into the terminal state.
	
	\subsection{Action Space}
	As discussed in Section~\ref{sec:model_jrc_metrics}, the  ICS waveform plays a critical role in the system performance.
	In particular, given a fixed CPI time $T$, using a large number of frames results in a high reliability of data transmission and a high sensing accuracy.
	However, it reduces the efficiency of data transfer as there are more overhead data. 
	At each time slot, the  ICS-AV needs to select the most suitable ICS waveform structure (i.e., the number of frames in the CPI) to maximize the system performance.
	Thus, the action space can be defined as follows:
	\begin{equation}
		\mathcal{A} = \big\{1, \dots, N \big\},		
	\end{equation}  
	where $N$ is the maximum number of frame in the CPI. 
	Recall that the beginning of each frame needs to be placed at the multiple of $T_d$ consecutively, and thus $N = \lfloor \frac{T_{CPI}}{T_d} \rfloor$, where $T_{CPI}$ is the CPI time and $\lfloor \cdot \rfloor$ is the floor function.
	As such, if the number of frames in the CPI selected by the ICS-AV is less than $N$, the last frame will be longer than others.
	Note that when the data queue is empty, the  ICS-AV can still send dummy frames (e.g., frames whose data fields contain random bits) to maintain the ICS's sensing function continuously. 
	
	\subsection{Reward Function}	
	Since the ICS system performs two functions simultaneously, i.e., data transmission and sensing, we aim to maximize the ICS system performance by balancing the data transmission efficiency and the sensing accuracy.
	Thus, the reward function needs to capture both of them. 
	The data transmission efficiency can be defined according to the number of packets waiting in the queue and the number of dropped packets.
	Specifically, the lower the number of packets in the data queue and the number of dropped packets are, the higher the efficiency of the ICS system is.
	Suppose that at time slot $t$, the ICS-AV observes state $s_t$ and takes action $a_t$.
	Let $q_t$, $\delta_t$, and $l_t$ denote the current size of the data queue, the sensing accuracy, and the number of dropped packets that the ICS-AV observes at the end of $t$, respectively.
	Then, the immediate reward function can be defined as follows:
	\begin{equation}
		\label{eq:reward_function}
		r_t (s_t, a_t)= -(w_1q_t + w_2\delta_t + w_3l_t),
	\end{equation}
	where $w_1$, $w_2$, and $w_3$ are the weights to tradeoff between the number of packets waiting in the queue, the sensing accuracy, and the number of dropped packets due to the data queue full.
	The negative function in~\eqref{eq:reward_function} implies that the ICS-AV should take an action that can quickly free the data queue, lower the number of dropped packets, and achieve a high sensing accuracy.
	Note that the lower the value of $\delta_t$ is, the higher the velocity estimation accuracy of the system is. 
	{Given the above, the immediate reward function~\eqref{eq:reward_function} effectively captures joint performance of communication and sensing function of the ICS.}
	\subsection{Optimization Formulation}
	The objective of this study is to find an optimal policy for the ICS-AV that maximizes the long-term reward function.
	Let $R(\pi)$ denote the long-term average reward function under policy $\pi: \mathcal{S} \rightarrow \mathcal{A}$, then the problem can be formulated as:
	\begin{eqnarray} 
		\label{eq:average_reward}
		\max_\pi \>	{R}(\pi)	=	\lim_{T \rightarrow \infty} \frac{1}{T} \sum_{t=1}^{T} {\mathbb{E}} \big( r_t(s_t,\pi(s_t)) \big),		
	\end{eqnarray}  
	where $\pi(s_t)$ is the action at time $t$ according to policy $\pi$.
	Thus, given the ICS-AV's current data queue length and wireless channel quality, the optimal policy $\pi^*$ gives an optimal action that maximizes $R(\pi)$.
	In addition, Theorem~\ref{theorem1} shows that the average reward function is well defined regardless of the initial state.
	
	\begin{theorem}
		\label{theorem1}
		With the proposed MDP framework, the average reward function $R(\pi)$ is well defined under any policy $\pi$ and regardless of a starting state.
	\end{theorem}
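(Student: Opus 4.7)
The plan is to prove Theorem~\ref{theorem1} by showing that, under any stationary policy $\pi$, the state process $\{s_t\}$ is a finite-state Markov chain that is unichain (i.e., has a single recurrent communicating class plus possibly transient states), and then invoke the classical ergodic theorem for finite Markov chains, which guarantees that the Cesàro limit in~\eqref{eq:average_reward} exists and is independent of the starting state.

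First, I would verify the basic well-definedness ingredients. The state space $\mathcal{S}$ has cardinality $(Q+1)(C+1)$, the action space $\mathcal{A}$ is finite with $|\mathcal{A}|=N$, and the per-slot reward $r_t = -(w_1 q_t + w_2 \delta_t + w_3 l_t)$ is uniformly bounded because $q_t \in \{0,\dots,Q\}$, $l_t \le \lambda_{\max}$ packets per slot (or at worst bounded by the buffer plus arrivals), and $\delta_t$ is bounded below by the expression in~\eqref{eq:velocity_accuracy} evaluated at the smallest admissible $N_f$ and largest admissible SNR, and above by the same expression at the worst configuration. Hence $\{r_t\}$ is a uniformly bounded stochastic process and the Cesàro average $\frac{1}{T}\sum_{t=1}^{T}\mathbb{E}[r_t]$ is well-defined term-by-term.

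The main step is the unichain property. Fix an arbitrary stationary policy $\pi$. I would argue accessibility of a common recurrent state, for instance $s^\star = (0, c_0)$, from every state $(q,c)\in\mathcal{S}$ with positive probability in finitely many steps. The queue coordinate can drain: under any action $a\in\mathcal{A}$, with positive probability (i) zero Poisson arrivals occur in a slot, since $\Pr(\text{arrivals}=0)=e^{-\lambda}>0$, and (ii) at least one frame is successfully transmitted given the per-frame success probability $(1-p_b)^F>0$ dictated by the bit-error vector $\mathbf{p_e}$; iterating this at most $Q$ times drives $q$ to $0$. The channel coordinate, which evolves independently of the policy according to a finite-state transition structure on $\{0,\dots,C\}$ (implicit in the MCS-level model), is assumed irreducible on its own classes, so a class $c_0$ is reachable from every $c$. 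Conversely, from $s^\star$ every other state $(q',c')$ is reachable with positive probability by combining arrival events (Poisson gives any number of arrivals with positive probability) with channel transitions. This establishes that all states communicate under $\pi$, hence the induced chain is irreducible and, being finite, positive recurrent.

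The hard part will be cleanly handling the channel dynamics, because the excerpt treats $\mathbf{p_e}$ as a probability vector over classes without explicitly specifying its transition law; I would therefore state as a standing assumption that the channel process is an irreducible Markov chain over the $C$ classes (which is standard for MCS/SNR models) and note that if one only assumes unichain rather than irreducibility on the channel, the same conclusion holds with the weaker notion. Once the unichain property is in hand, the standard result (e.g., Puterman, Theorem 8.3.4) implies the existence of a unique stationary distribution $\mu^\pi$ on $\mathcal{S}$ and the limit
\begin{equation}
R(\pi)=\lim_{T\to\infty}\frac{1}{T}\sum_{t=1}^{T}\mathbb{E}\bigl(r_t(s_t,\pi(s_t))\bigr)=\sum_{s\in\mathcal{S}}\mu^\pi(s)\,r\bigl(s,\pi(s)\bigr),
\end{equation}
which is finite and independent of the initial state $s_1$. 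Since $\pi$ was arbitrary, $R(\pi)$ is well defined for every policy, establishing Theorem~\ref{theorem1}.
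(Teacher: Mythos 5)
Your proposal follows essentially the same route as the paper's own proof: both argue that the Poisson arrivals and the randomly drawn channel class make the induced finite-state chain irreducible under any policy, and then conclude that the long-run average reward exists independently of the initial state. Your version is more careful than the paper's (it adds the reward-boundedness check, the explicit queue-draining/filling argument, the standing assumption on the channel transition law, and the citation of the ergodic theorem for finite unichain MDPs), but the underlying idea is identical.
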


	\begin{proof}
		We first prove that the Markov chain of the considered problem is irreducible as follows.
		Recall that the state of the ICS consists of two factors, i.e., the current queue length $q$ and the wireless channel quality $c$. 
		For each time slot, the data arrival rate is assumed to follow the Poison distribution and the channel quality is derived from $C$ class accordingly a probability vector $\mathbf{p_e}=[p_1,p_2,\dots,p_C]$.
		Therefore, given the ICS is at state $s$ at time $t$, it can move to any other states $s' \in \mathcal{S}\{s\}$ after finite time steps.
		As such, the proposed MDP is irreducible with the state space $\mathcal{S}$, thereby making the average reward function $R(\pi)$ is well defined under any policy $\pi$ and regardless of a starting state.		  
	\end{proof}
	
	\section{Reinforcement Learning based Solution for ICS-AV Operation Policy}	
	\label{sec:solutions}
	Due to the highly dynamic and uncertainty of the environment (e.g., packet drop probability due to the channel quality and the data arrival rate), the ICS-AV is unable to obtain this information in advance.
	In this context, RL can help the ICS-AV obtain the optimal policy without requiring completed knowledge about surrounding environment in advance.
	The idea is that the ICS-AV can gradually learn through interacting with its surrounding environment.	
	In the following, we first present a Q-learning based approach that is one of the most popular algorithms in RL due to its simplicity and convergence guarantee.
	We then develop an intelligent solution that can effectively overcome limitations of Q-learning by adopting three recent advanced techniques in RL, namely deep Q-learning, deep double Q-learning, and dueling neural network architecture. 
	
	\subsection{Q-learning based ICS's Waveform Structure Optimization}
	\label{subsec:Q-learning}
	In RL, the value of state $s$ at any time step $t$ under policy $\pi:\mathcal{S}\rightarrow\mathcal{A}$ is calculated by the state value function, i.e.,
	\begin{equation}
		v_{\pi}(s) = \mathbb{E}_{\pi} \bigg[\sum_{m=0}^{\infty}\eta^m r_{t+m}\Big|s_t=s \bigg], \forall s \in \mathcal{S},
	\end{equation}
	where $\mathbb{E}_{\pi}[\cdot]$ is the expectation under policy $\pi$ and $\eta$ is the discount factor that indicates the importance of future rewards.
	Similarly, the state-action function evaluates how good to performing action $a$ at state $s$ then following policy $\pi$, which is given by:
	\begin{equation}
		q_{\pi}(s,a) = \mathbb{E}_{\pi} \bigg[\sum_{m=0}^{\infty}\eta^m r_{t+m}\Big|s_t=s, a_t=a \bigg],
	\end{equation}   
	Under the optimal policy $\pi^*$, the optimal state-action value function, i.e., $q^*(s,a)$, is given by~\cite{sutton_intro_1992}:
	\begin{equation}
		q^*(s,a) = \mathbb{E}\Big[r_t + \eta \> \max_{a'\in\mathcal{A}}\> q^*(s_{t+1},a')\big|s_t=s, a_t=a\Big].
	\end{equation}
	Thus, once $q^*(s,a)$ is obtained, the optimal policy is achieved by taking actions that maximize $q^*(s,a)$ for all state $s\in \mathcal{S}$. 
	
	Q-learning algorithm uses a table, namely Q-table, to learn $q^*(s,a)$, making its implementation simple. 
	The Q-learning based approach for the ICS-AV is presented in details in Algorithm~\ref{alg:qlearning}. 
	Specifically, each cell of the Q-table keeps the estimated value of Q-function (named Q-value) for taking action $a$ at state $s$, denoted by $q(s,a)$. 
	The Q-table is iteratively updated based on interactions with the surrounding environment.
	Given action $a_t$ is selected under the $\epsilon-$policy~\eqref{eq:epsilon_greedy} at state $s_t$ at time $t$, the ICS-AV obtains the immediate reward $r_t$ and moves to a next state $s_{t+1}$.
	Then, $q(s_t,a_t)$ is updated as follows:
	\begin{equation}
		\label{eq:updateQfunction}
		\begin{aligned}
			q(s_t,a_t) \leftarrow &q(s_t,a_t) + \\ &\alpha_t\big[ \underbrace{\underbrace{r_t(s_t, a_t) + \eta\max_{a_{t+1}} q(s_{t+1}, a_{t+1})}_{\text{Target Q-value } Y_t} - q(s_t,a_t)}_{\text{Temporal difference (TD)}}\big],
		\end{aligned}
	\end{equation}
	where $\alpha_t$ is the learning rate that controls how important of new knowledge (i.e., TD) to the update of Q-function.
	By using~\eqref{eq:updateQfunction} and $\alpha_t$ that satisfies~\eqref{eq:learning_rate_rules} to iteratively updating the Q-function, it is proven that $q(s,a)$ will converge to $q^*(s, a)$~\cite{Watkins1992QLearning}. 
	\begin{equation}
		\label{eq:learning_rate_rules}
		\alpha_t \in [0,1), ~\sum_{t=1}^{\infty}\alpha_t = \infty, \mbox{ and } \sum_{t=1}^{\infty} ( \alpha_t  )^{2} < \infty.
	\end{equation}
	\begin{algorithm}[t]
		\caption{Q-learning based ICS waveform optimization}
		\label{alg:qlearning}
		\begin{algorithmic}[1]
			\STATE The ICS-AV establishes the parameters (i.e., $\eta$, $\alpha$, and $\epsilon$) and create a Q-table arbitrarily (e.g., all cells are set to zero)
			\FOR{\textit{t = 1 to T}}
			\STATE The ICS-AV performs an action following the $\epsilon$-greedy policy as follows:
			\begin{align}
				a_t\! =\! \left\{
				\begin{array}{ll}						
					\!\underset{a \in \mathcal{A}}{\text{argmax}}~Q(s_t,a),&\mbox{with probability }1-\epsilon,\\
					\!\mbox{random action }a\!\in\!\mathcal{A}, &\mbox{otherwise}.												
				\end{array}	\right.
				\label{eq:epsilon_greedy}
			\end{align}
			\STATE ICS-AV observes next state $s_{t+1}$ and reward $r_t$, then updates $Q(s_t,a_t)$ and reduces $\epsilon$ by~\eqref{eq:updateQfunction}
			\ENDFOR
		\end{algorithmic}
	\end{algorithm}
	However, the usage of a table in Q-learning leads to the curse of dimensionality problem that results in a long time for learning, especially in the considered ICS system with a high dimensional state space.
	In addition, the uncertainty and dynamic of the ICS's environment (e.g., the probability of frame loss and the packet arrival rate) make it more challenging for the ICS-AV to achieve an optimal policy.
	Furthermore, Q-learning is prone to overestimate the Q-values due to the max operation in~\eqref{eq:updateQfunction}, i.e., the Q-value of a state-action pair is estimated higher than its actual value~\cite{Hasselt2016_DoubleQ}.
	It may not be crucial if all Q-values are uniformly shifted. 
	Nevertheless, they are typically non-uniform in practice, thereby substantially slowing down the ICS-AV's learning process.	
	To that end, we will discuss our highly-effective intelligent algorithm (namely i-ICS) to quickly achieve the optimal operation policy for the ICS-AV, thus maximizing the data transmission efficiency and sensing accuracy of the system.
	\subsection{Deep Reinforcement Learning based ICS's Waveform Structure Optimization}
	Instead of using a Q-table, a DNN can be employed to quickly learn the optimal Q-function and address the high dimensional state and action spaces effectively ~\cite{Mnih2015Human}.
	The employment of DNN in RL forms a new group of RL algorithms, i.e., DRL.
	Thus, this paper develops a DRL algorithm, i.e., i-ICS, that can efficiently overcome the slow convergence and overestimation problems of Q-learning by adopting three innovation techniques in RL, including 1) Deep Q-learning (DQ)~\cite{Mnih2015Human}, 2) Double Deep Q-learning (DDQ)~\cite{Hasselt2016_DoubleQ}, and 3) dueling architecture~\cite{Wang2016Dueling}.
	In this way, our proposed approach can inherit all advantages of these techniques, thereby stabilizing the learning process, improving the learning speed, and reducing the overestimation.
	As a result, the ICS-AV can quickly obtain the optimal policy to maximize both data transmission efficiency and sensing accuracy.
	\begin{algorithm}[t]
		\caption{The i-ICS Algorithm}
		\label{alg:drl-jcr}
		\begin{algorithmic}[1]
			\STATE Initialize buffer $\mathbf{E}$ and $\epsilon$. 
			\STATE Create Q-network $Q$ with arbitrary parameters $\theta$ 
			\STATE Create target Q-network $\hat{Q}$ with parameters $\theta^-=\theta$.
			\FOR{\textit{t = 1 to T}}
			\STATE Based on $\epsilon$-greedy policy~\eqref{eq:epsilon_greedy}, select action $a_t$.
			\STATE Perform $a_t$, observe next state $s_{t+1}$ and receive reward $r_t$.
			\STATE Store $(s_t, a_t, r_t, s_{t+1})$ in $\mathbf{E}$.
			\STATE Create an experience mini-batch by sampling from $\mathbf{E}$ randomly, denoted by $(s, a, r, s') \sim U( \mathbf{E}$).				
			\STATE Compute $Q(s_t, a_t; \theta)$ and $Y_t$ by~\eqref{eq:recontruct_Qfunction} and~\eqref{eq:targetDDQN}, respectively.			   
			\STATE Update the Q-network's parameters by SGD.
			\STATE Reduce $\epsilon$.
			\STATE Replace  $\hat{Q} \leftarrow Q$ at every $U$ steps.
			\ENDFOR
		\end{algorithmic}
	\end{algorithm}

	The details of i-ICS are presented in Algorithm~\ref{alg:drl-jcr}.
	In particular, the major steps are an analogy to those of Q-learning.
	However, the experience at time $t$ is not used directly to train the neural network, but it is stored in a buffer $\mathbf{E}$ instead.
	The reason is that the successive experiences are highly correlated, resulting in severely slowing the learning process ~\cite{halkjaer_the_1996}.
	Then, a mini-batch of experiences is sampled uniformly at random from $\mathbf{E}$ to train the neural network.	 
	By doing so, the high correlation between successive experiences decreases so that the convergence is accelerated. 
	In addition, the usage of the experience buffer increases the data usage efficiency since each experience can be utilized multiple times to train the DNN.
	In i-ICS, the optimal Q-function is estimated by a DNN where the input layer consists of two inputs corresponding to the dimension of ICS-AV's state (i.e., the data queue size and the channel quality), whereas the number of neurons in the output layer is the maximum number of frames in the CPI, i.e., $N$. 
	Specifically, given that state $s$ is passed to the DNN, each output layer's neuron value corresponds to a Q-value of one action in this state.
	\begin{figure}[t]
		\centering
		\includegraphics[width=0.99\linewidth]{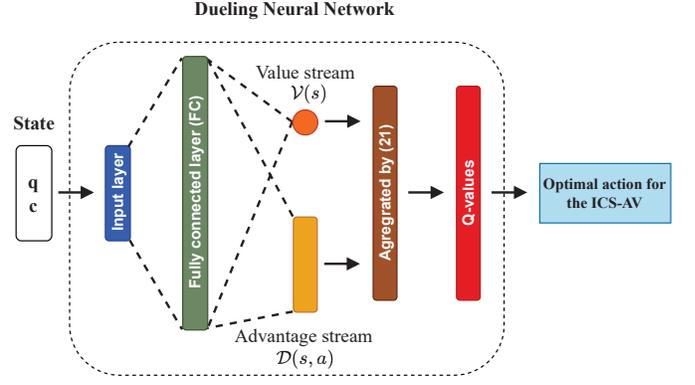}
		\caption{The dueling architecture for i-ICS in which a DNN is divided into two streams, one for estimating the state value function $v(s)$ and another for estimating the advantage function $d(s,a)$.}
		\label{fig:dueling-architecture}
		\vspace{-10pt}
	\end{figure}
	
	To stabilize the learning process, we leverage a dueling neural network architecture where the DNN is split into two streams, one for estimating the state value function $v(s)$ and another for estimating the advantage function $d(s,a)$, as shown in Fig.~\ref{fig:dueling-architecture}. 	
	Recall that $q(s,a)$ evaluates the value of executing an action at a state, whereas $v(s)$ describes how good to end up at a state. 
	Then, $d(s,a)$ is defined by the relationship between $v(s)$ and $q(s,a)$ under a policy $\pi$ as:
	\begin{equation}
		\label{eq:advantage_function}
		d^\pi(s,a) = q^\pi(s,a) - v^\pi(s).
	\end{equation}
	Thus, $d(s,a)$ demonstrates the importance of action $a$ compared to other actions at state $s$.
	Let $\phi$ and $\psi$ denote the state value and advantage streams' parameters of the dueling DNN, respectively.
	Then, we denote $Q(s,a; \phi,\psi)$, $V(s; \phi)$, and $D(s,a; \psi)$ as the estimations of $q(s,a)$, $v(s)$, and $d(s,a)$ given by the dueling network, respectively.
	Hence, the value of taking action $a$ at state $s$ given by the dueling DNN is expressed as:
	\begin{equation}
		\label{eq:recontruct_Qfunction}
		Q(s,a; \phi,\psi) = V(s; \phi) + D(s,a; \psi).		
	\end{equation}
	Note that \eqref{eq:recontruct_Qfunction} is unidentifiable, i.e., for a given Q-value, it is unable to uniquely determine $V(s; \phi)$ and $D(s,a; \psi)$ since the Q-value is unchanged if $V(s; \phi)$ decreases the same amount that $D(s,a; \psi)$ increases.
	Therefore, a learning algorithm may have poor performance if it uses \eqref{eq:recontruct_Qfunction} directly.
	This issue can be addressed as follows~\eqref{eq:recontruct_Qfunction_max}.
	\begin{equation}
		\label{eq:recontruct_Qfunction_max}
		Q(s,a; \phi, \psi) = V(s; \phi) + \Big(D(s,a; \psi) - \max_{a' \in \mathcal{A}} D(s,a';\psi)\Big).		
	\end{equation}
	By doing so, the estimation of the advantage function is forced to zero at the selected action. 
	Specifically, given $a^* \!=\!  \text{argmax}_{a \in \mathcal{A}}\>Q(s,a; \phi,\psi) \!=\! \text{argmax}_{a \in \mathcal{A}}\>D(s,a; \psi)$, we have $Q(s,a^*; \phi,\psi)=V(s; \phi)$.
	Thus, $V(s; \phi)$ gives an estimation of the state value function, while $D(s,a';\psi)$ provides an estimation of the advantage function~\cite{Wang2016Dueling}.	
	Nevertheless, because the change of the advantage stream is as fast as that of the predicted optimal action's advantage, the estimation of Q-function may be unstable.
	To overcome this problem, the max operation in~\eqref{eq:recontruct_Qfunction_max} can be replaced by the mean as follows~\cite{Wang2016Dueling}:   
	\begin{equation}
		\label{eq:recontruct_Qfunction_mean}
		Q(s,a; \phi,\psi) \!=\! V(s; \phi)\! +\! \Big(D(s,a; \psi)\! -\! \frac{1}{|\mathcal{A}|}\! \sum_{a'}\!D(s,a';\psi)\Big).		
	\end{equation}
	\begin{figure}[t]
		\centering
		\includegraphics[width=0.99\linewidth]{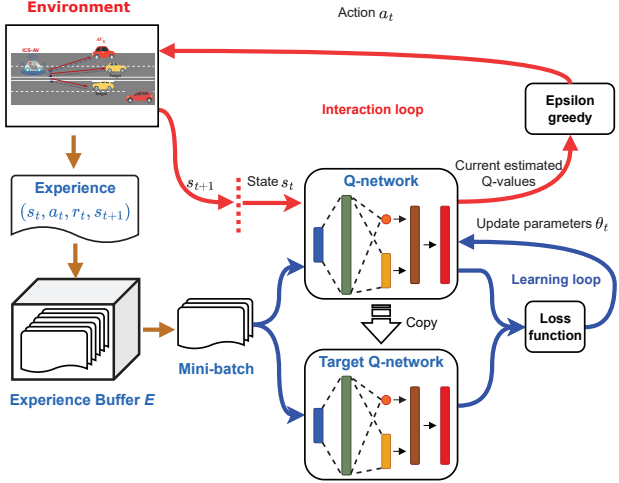}
		\caption{The proposed i-ICS model, in which the ICS-AV obtains an optimal policy by gradually updating its policy based on its observations of the surrounding environment.}
		\label{fig:DRL-ICS_model}
		\vspace{-10pt}
	\end{figure}
	The proposed i-ICS addresses the Q-learning's overestimation by adopting the double deep Q-learning~\cite{Hasselt2016_DoubleQ}.
	In particular, i-ICS utilizes two dueling neural networks whose architectures are identical: 1) Q-network $Q$ for selecting an action and 2) target Q-network $\hat{Q}$ for evaluating an action.
	Hereinafter, let  $\theta_t$ denote the Q-network's parameters and $\theta_t^-$ denote the target Q-network's parameters at time $t$.
	Thus, the target Q-value of performing action $a$ at state $s$ at time $t$ is given by:
	\begin{equation}
		\label{eq:targetDDQN}
		Y_t = r_t + \eta~\hat{Q}\big(s_{t+1}, \underset{a'}{\operatorname{argmax}}~Q(s_{t+1},a'; \theta_t);\theta^-_t\big).
	\end{equation}
	Since minimizing the TD is the purpose of training the Q-network, the loss function can be given as:
	\begin{equation}
		\begin{aligned}
			\label{eq:lossfunction}
			L_t(\theta_t) = \mathbb{E}_{(s,a,r,s')}\bigg[ \bigg( Y_t
			- Q(s,a;\theta_t)\bigg)^2\bigg],
		\end{aligned}
	\end{equation} 
	where $(s,a,r,s')$ is a data point (i.e., an experience) in memory $\mathbf{E}$.

	In deep learning, Gradient Descent (GD) is widely employed to minimize the loss function since it can obtain global minima and is straightforward to be implemented~\cite{du_gradient_2019}.
	At time $t$, the cost function in GD is computed as follows:
	\begin{equation}
		\begin{aligned}
			\label{eq:costfunction}
			C_t(\theta_t) = \frac{1}{|\mathbf{E}|}\sum_{(s,a,r,s') \in \mathbf{E}}L_t(\theta_t).
		\end{aligned}
	\end{equation}
	Then, GD updates neural network parameters as: $\theta_{t+1} = \theta_t -\beta_t \nabla_{\theta_t} C_t(\theta_t),$
	where $\beta_t$ is a step size at time $t$ and $\nabla_{\theta_t}[.]$ is the gradient function with respect to parameters $\theta_t$.
	The main shortcoming of GD is that it requires all data points in $\mathbf{E}$ to compute the loss and the gradient of the cost function for each update of parameters, making it sluggish when the data pool is enormous.
	Therefore, we propose to use stochastic gradient descent (SGD) to minimize the loss function in~\eqref{eq:costfunction} since it can accelerate the learning convergence~\cite{Robbins1951SGD}.
	In particular, for each iteration, SGD only computes the cost and its gradient for a mini-batch of experiences that is uniformly sampled from $\mathbf{M}$, making its computational complexity much lower than that of SGD.
	Note that in~\eqref{eq:lossfunction}, the target Q-value $Y_t$ looks like labels in supervised deep learning, which are fixed before the training process.
	Nevertheless, $Y_t$ will change if there is any change in the target Q-network's parameters~$\theta_t^-$, making the training process unstable.
	To that end, $\theta_t^-$ is only updated by copying from $\theta_t$ at every $U$ steps, as shown in Fig.~\ref{fig:DRL-ICS_model}. 

\subsection{Computational Complexity Analysis}
	The computational complexity of our proposed algorithm, i.e., i-ICS, mostly depends on the Q-network's training process.		
	The Q-network includes one input layer with $I$ neurons, one hidden layer with $H$ neurons, and two output layers with $V$ and $D$ neurons corresponding to the value stream and advantage stream, respectively. 
	Typically, DNN's training phase is conducted with matrix multiplication~\cite{goodfellow_deep_2016}.	
	Thus, the complexity of one training epoch of Q-network with a batch size $S_b$ is $\mathcal{O}\big(S_b(IH + HV + HD)\big)$.
	Supposed that Q-network's training phase consists of $T$ epochs, the complexity of i-ICS is $\mathcal{O}\big(TS_b(IH + HV + HD)\big)$. 	
	
	Generally, training a DNN requires a high computing resource, especially for complex neural architectures.
	Nevertheless, the Q-network in i-ICS only contains three layers in which only the hidden layer is fully connected.
	As such, our proposed approach can be effectively implemented on AVs that are usually equipped with sufficient computing resources. 
	In practice, there are some deep learning applications have been deployed in AVs such as Tesla Autopilot~\cite{autopilot} and ALVINN~\cite{pomerleau_alvinn_1989}.
	Thus, our proposed learning algorithm, i.e., i-ICS, can be effectively implemented on AVs. 
	
\section{Performance Evaluation}
	\label{sec:results}
	\subsection{Simulation Parameters}
	In this paper, we investigate our proposed solutions in a scenario that includes three types of objects: (i) ICS-AV, which is an autonomous vehicle equipped with 802.11ad based ICS, (ii) a receiving vehicle $AV_X$ that maintains communication with ICS-AV, and (iii) a target $AV_1$ moving at a distance of around the ICS-AV, as illustrated in Fig.~\ref{fig:system-model}.
	Specifically, the maximum related velocity that the ICS system can estimate is $50$~m/s.
	The CPI time $T_{CPI}$ is set to $10T_d$, meaning that the maximum number of frames in the CPI is 10.
	Note that one frame in the CPI can contain multiple packets.
	The ICS-AV has a data queue containing a maximum of $50$ packets.
	Recall that the data queue is used to store data packets when the amount of arrival data at the ICS transmitter excesses the ICS maximum transmission rate.
	All packets' sizes are assumed to be equal to $1500$ Bytes, which is the typical value of the maximum transfer unit (MTU) in WiFi networks and the Internet~\cite{nguyen_performance_2019}.	
	Other parameters of the ICS system are set based on IEEE 802.11ad standard, e.g., a carrier frequency of $60$ GHz and a sampling rate of $1.76$ GHz~\cite{802.11ad}.
	
	The 802.11ad standard requires that the packet error ratio (PER) to be less than $1\%$~\cite{802.11ad}.
	In practice, the PER depends on many factors such as wireless channel quality, modulation technique, and transmit power~\cite{khalili_a_new_2005}.
	Therefore, for demonstration purposes, we consider three levels with different values of PER: (i) level-1 with $10\%$, (ii) level-2 with $1\%$, and (iii) level-3 with $0.3\%$.
	Based on these quality levels, we assume that the wireless channel can fall into one of the three groups.
	The first one is the poor channel, in which the probability of channel quality at level-1, level-2, or level-3 at a time slot corresponds to a probability vector $\mathbf{p}_{c}^{p}= [0.6, 0.2, 0.2]$, i.e., at a time slot the probability of the channel quality at level 1 is 60\% and so on.
	Two other groups are normal channel and strong channel whose probability vectors are $\mathbf{p}_{c}^{n}= [0.2, 0.6, 0.2]$, and $\mathbf{p}_{c}^{g}= [0.2, 0.2, 0.6]$, respectively. 
	Note that these above settings are just for simulation purposes, our proposed learning algorithm (i.e., i-ICS) does not require to know these parameters in advance and can adapt with them through real-time interactions with the surrounding environment. 
	
	The parameters for the proposed learning algorithms are set as typical values, as in~\cite{Mnih2015Human,Hasselt2016_DoubleQ,Wang2016Dueling}. 
	Specifically, the value of $\epsilon$ in $\epsilon$-greedy policy is $1$ at the beginning of the learning process.
	Then, it is decreased at each time slot until $\epsilon=0.01$.
	The discount factor $\eta$ is $0.9$ for both the Q-learning based algorithm and the i-ICS.
	For the Q-learning based approach, the learning rate is $0.1$.
	For the proposed i-ICS, the Adam optimizer is used to train the Q-network with a learning rate of $10^{-4}$ and the target Q-network's parameters are updated at every $10^4$ time steps.
	
	Recall that the ICS-AV does not have any prior information about its surrounding environment's uncertainties and dynamics, e.g., the packet drop probabilities and packet arrival rate.
	Therefore, the proposed solutions are compared with two baseline policies: 1) a greedy policy where the ICS-AV selects an action to maximize the reward function without caring about the uncertainties and dynamics of the environment and 2) a deterministic policy in which the ICS-AV always sends $n_{dp}$ frames in the CPI time. 
	Here, we set $n_{dp}$ to a half of $N$ (i.e., $5$) to demonstrate the ICS's average performance when the number of frames in the CPI is fixed.
	Note that we do not consider conventional optimization-based methods (e.g.,~\cite{kumari_adaptive_2019, kumari_virtual_2018}) as baselines since they require complete information about the surrounding environment in advance to optimize the system parameters.
	
	\subsection{Simulation Results}
	To evaluate the ICS system performance, we first examine the convergence rates of our proposed approaches, i.e., Q-learning based algorithm and i-ICS. 
	We then study the influences of several key factors (e.g., the packet arrival rate, wireless channel quality, and weights in the immediate reward function) on the performance of the ICS system.
	\begin{figure}[t]
		\centering
		\includegraphics[width=0.65\linewidth]{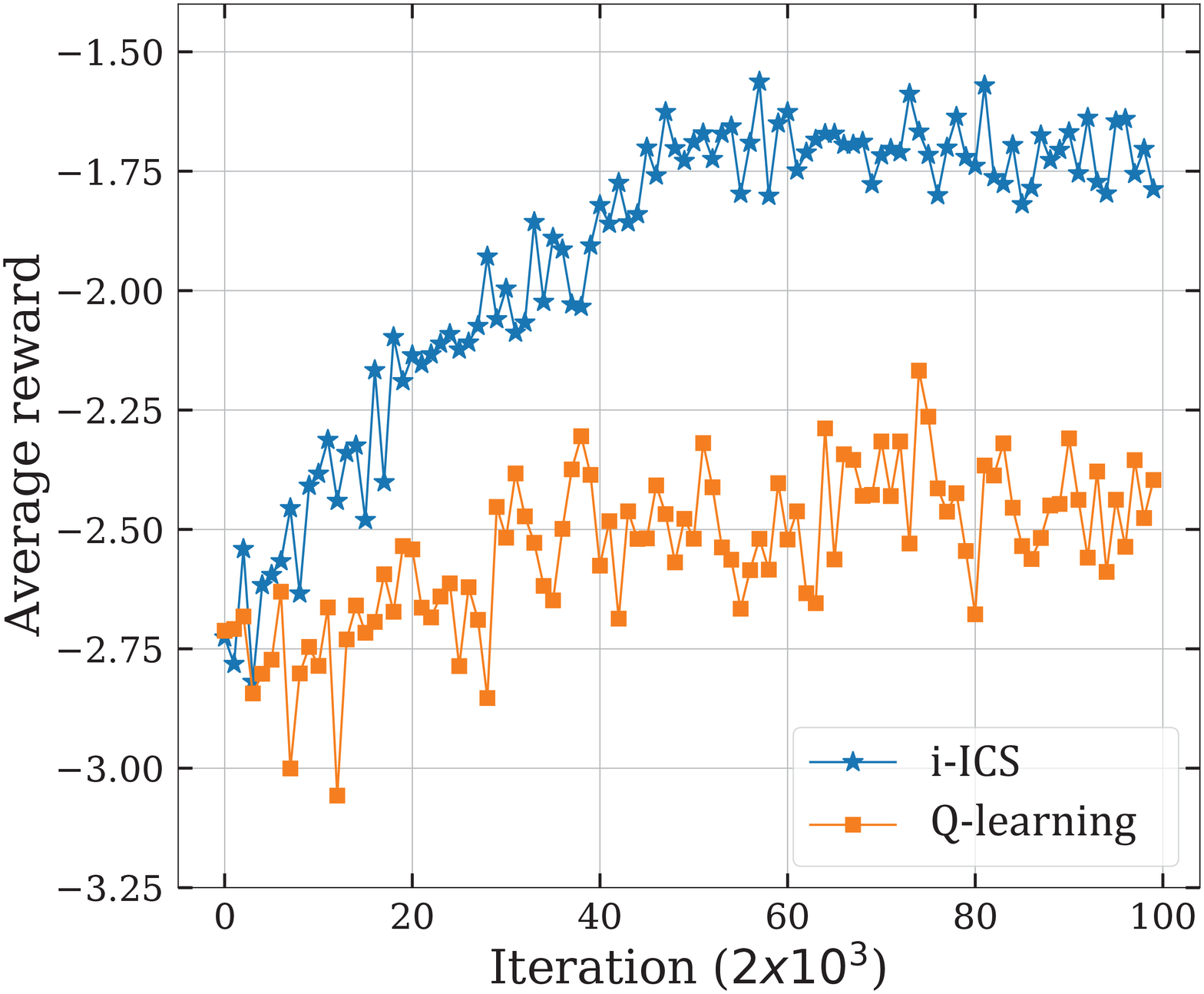}	
		\caption{Convergence rate of proposed algorithms.}
		\label{fig:convergene}
		\vspace{-10pt}
	\end{figure}
	\begin{figure*}[t]
		\centering
		$\begin{array}{cccc}
			\includegraphics[width=0.23\linewidth]{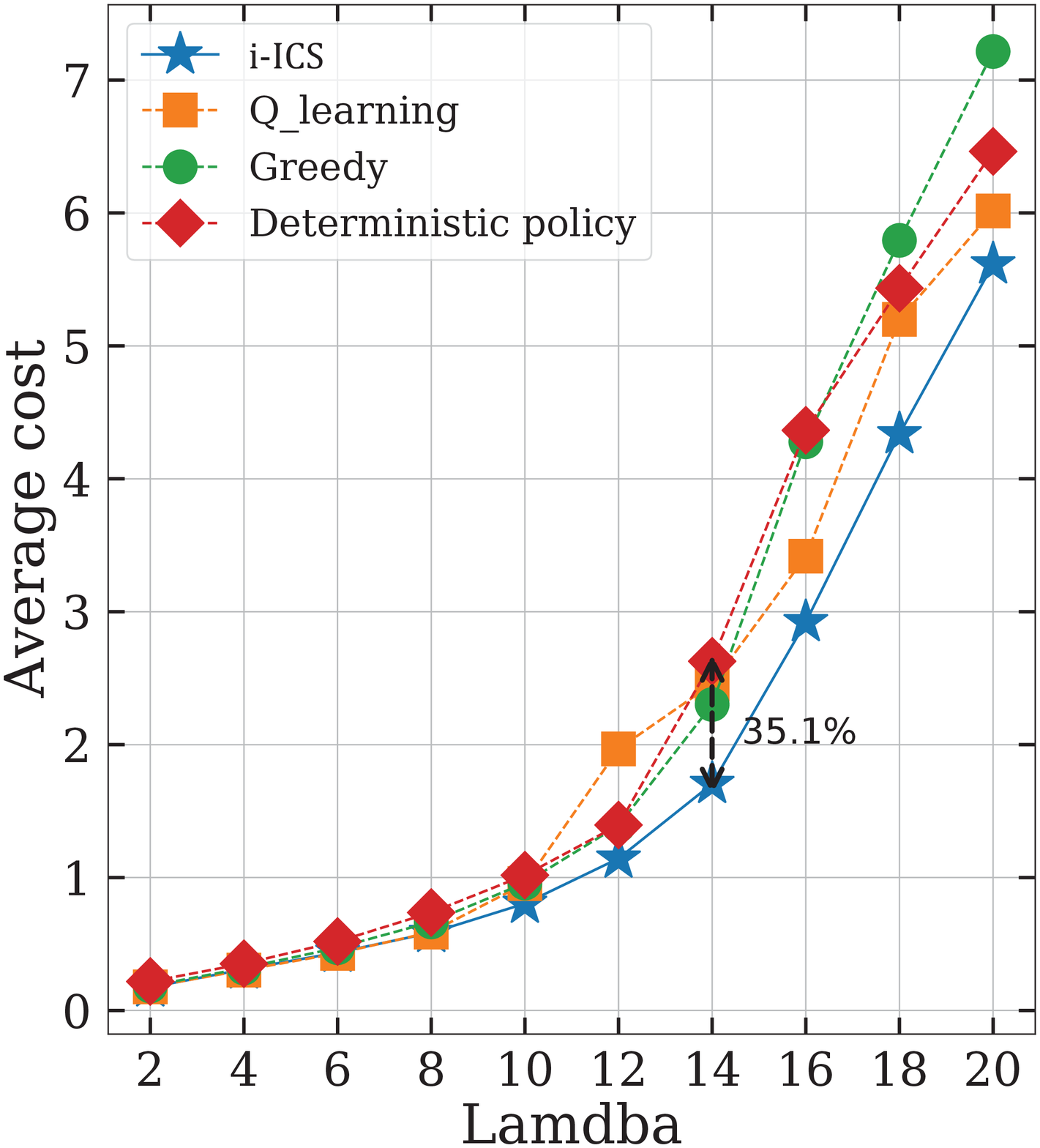}  
			&\includegraphics[width=0.23\linewidth]{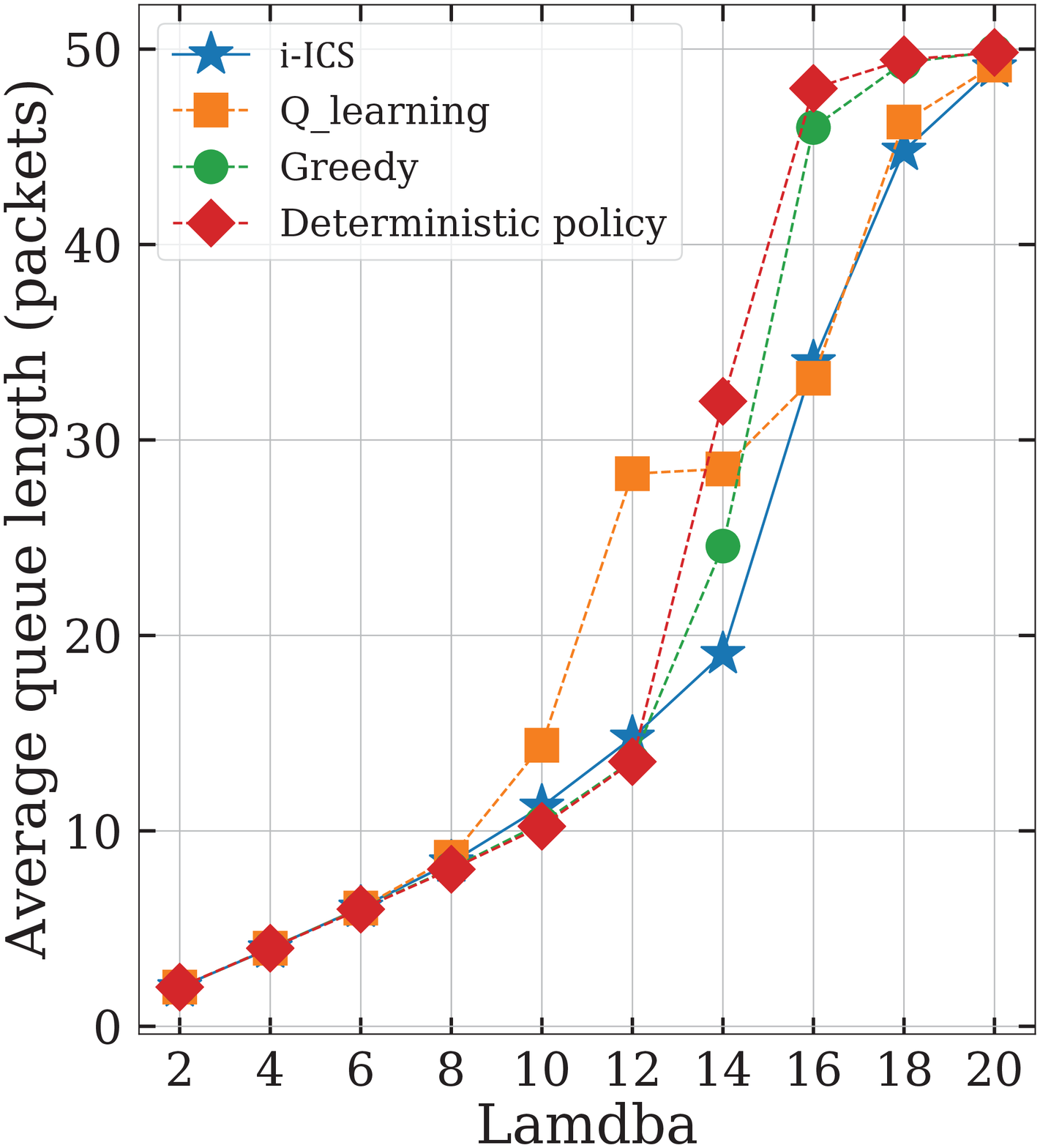}
			&\includegraphics[width=0.23\linewidth]{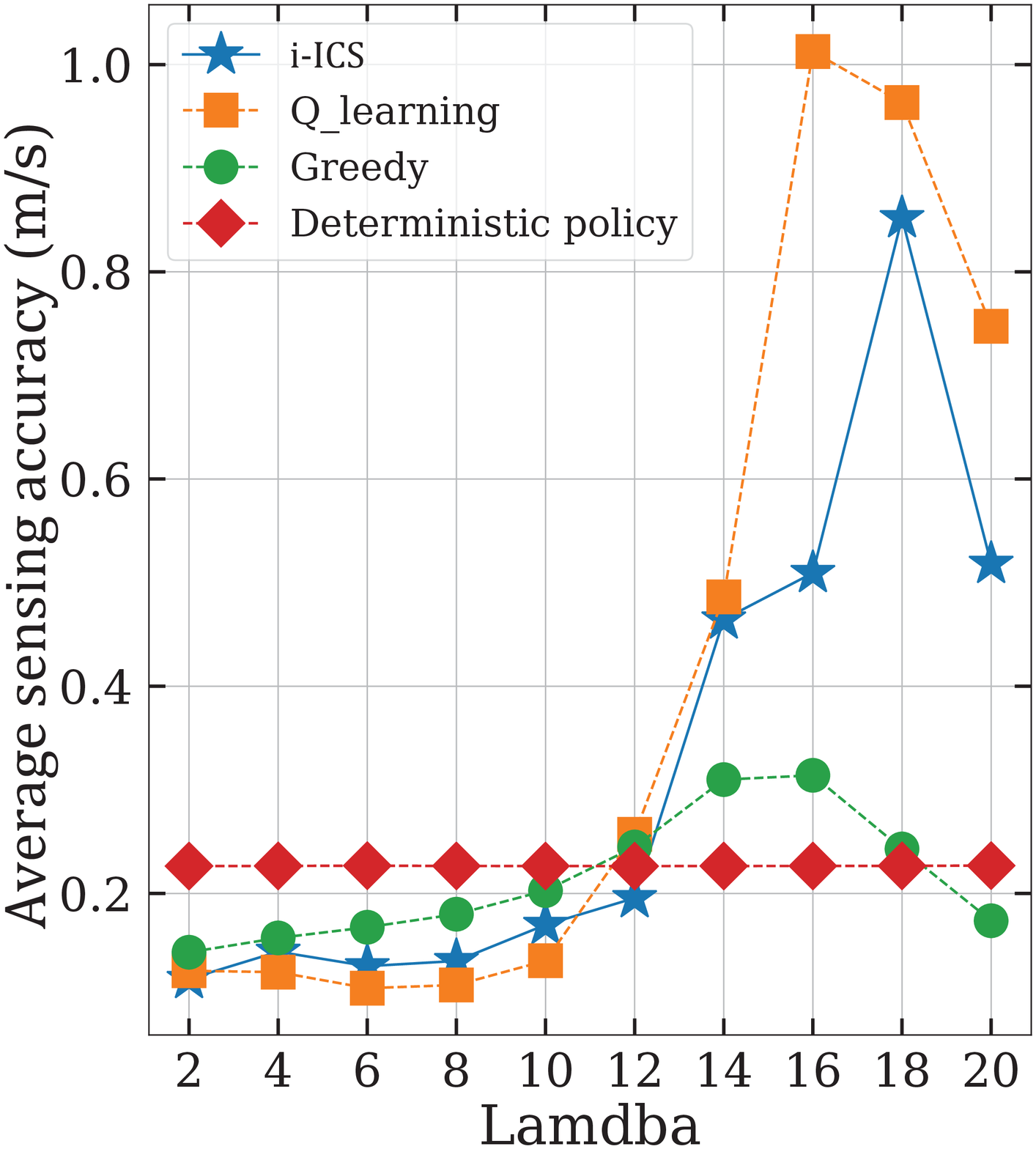}
			&\includegraphics[width=0.23\linewidth]{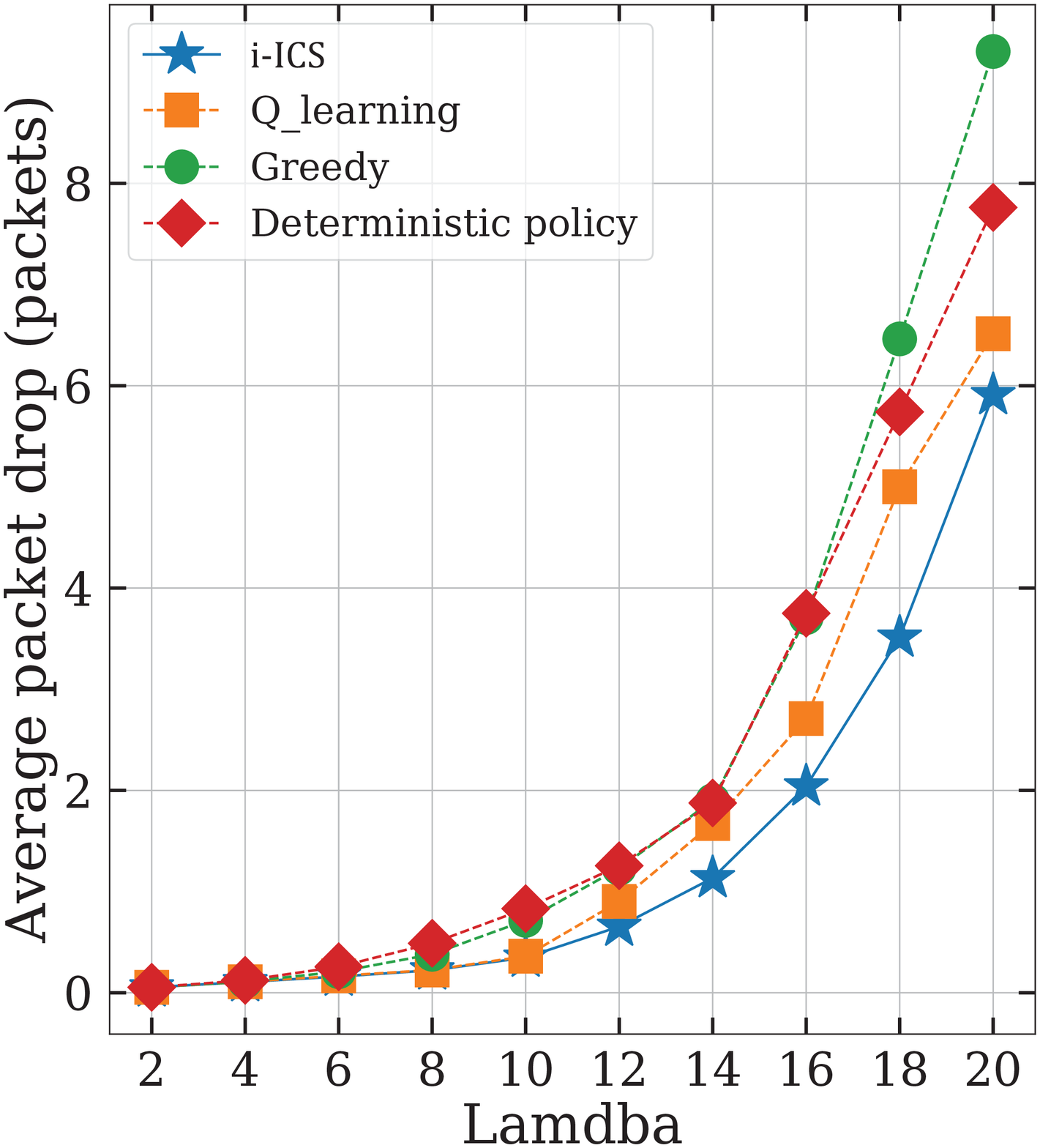}    
			\\
			{\fontsize{10}{1}\text{(a) Average cost}}
			&{\fontsize{10}{1}\text{(b) Average queue length}}
			&{\fontsize{10}{1}\text{(c) Average sensing accuracy}}
			&{\fontsize{10}{1}\text{(d) Average packet drop}}
		\end{array}$
		\caption{Varying data arrival rate with normal channel quality, $w_1=0.05,w_2=0.4$, and $w_3=0.5$.}
		\label{fig:normal_w1}
	\end{figure*}	
	\begin{figure*}[t]
		\centering
		$\begin{array}{cccc}
			\includegraphics[width=0.23\linewidth]{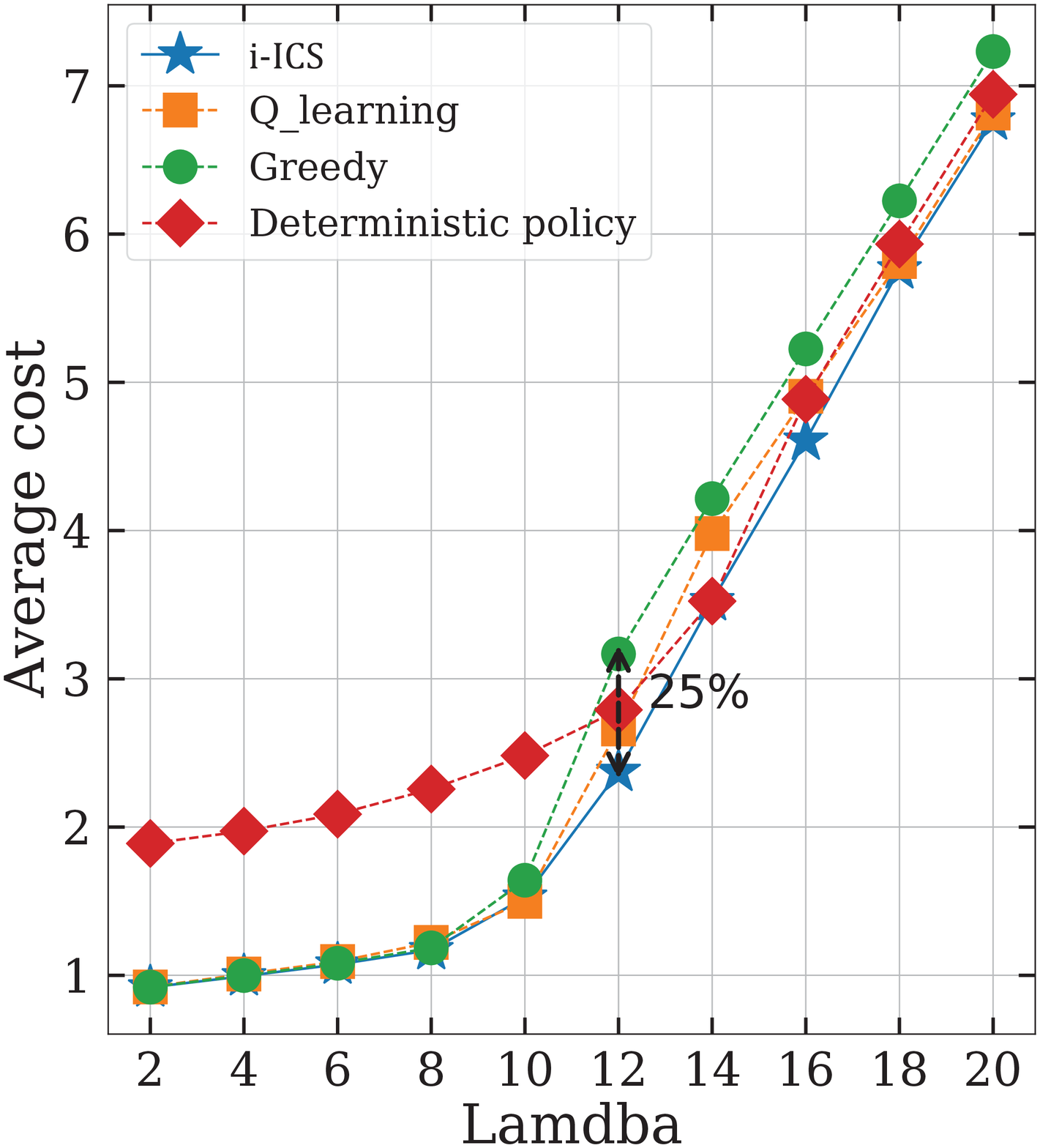}  
			&\includegraphics[width=0.23\linewidth]{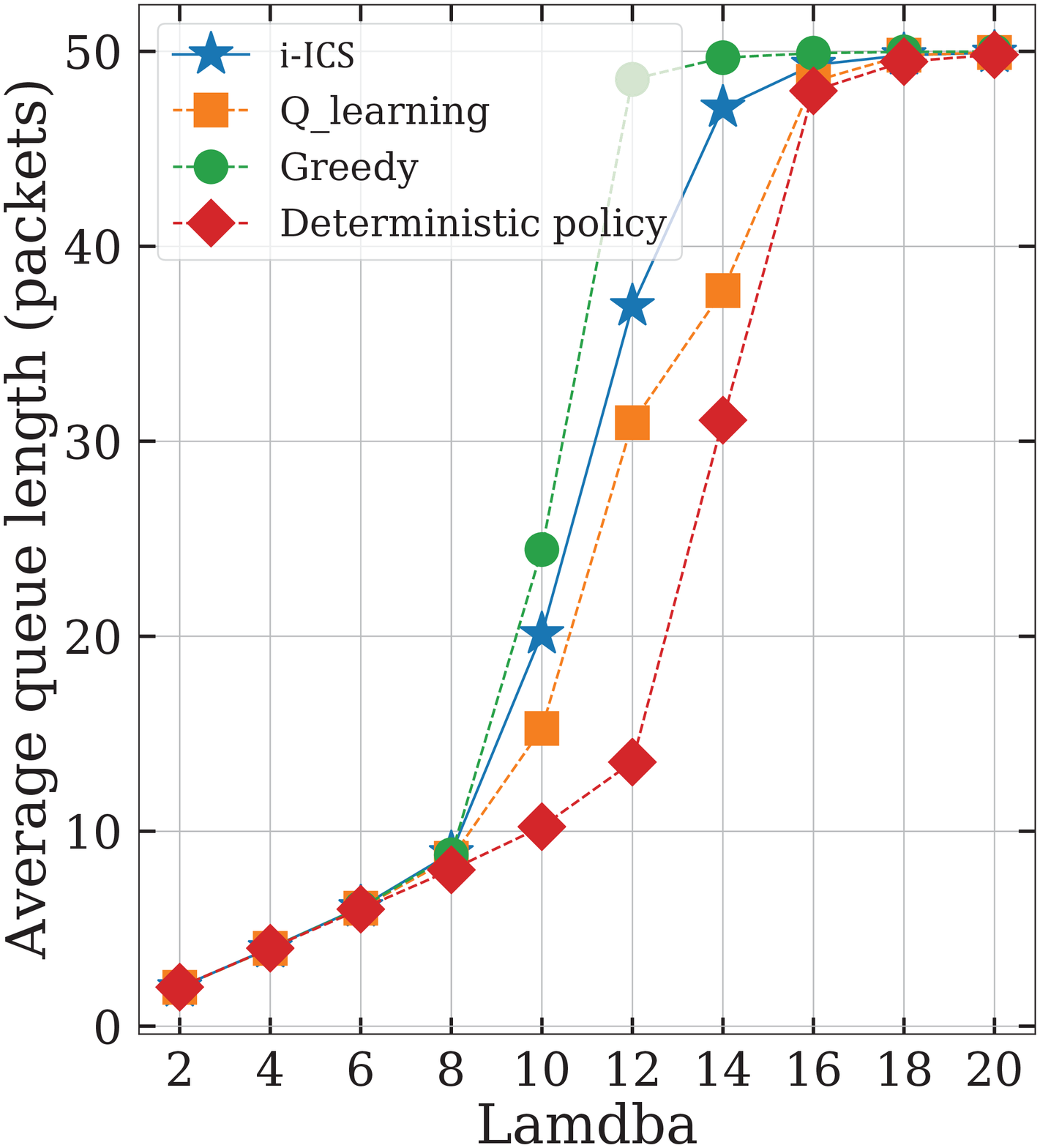}
			&\includegraphics[width=0.23\linewidth]{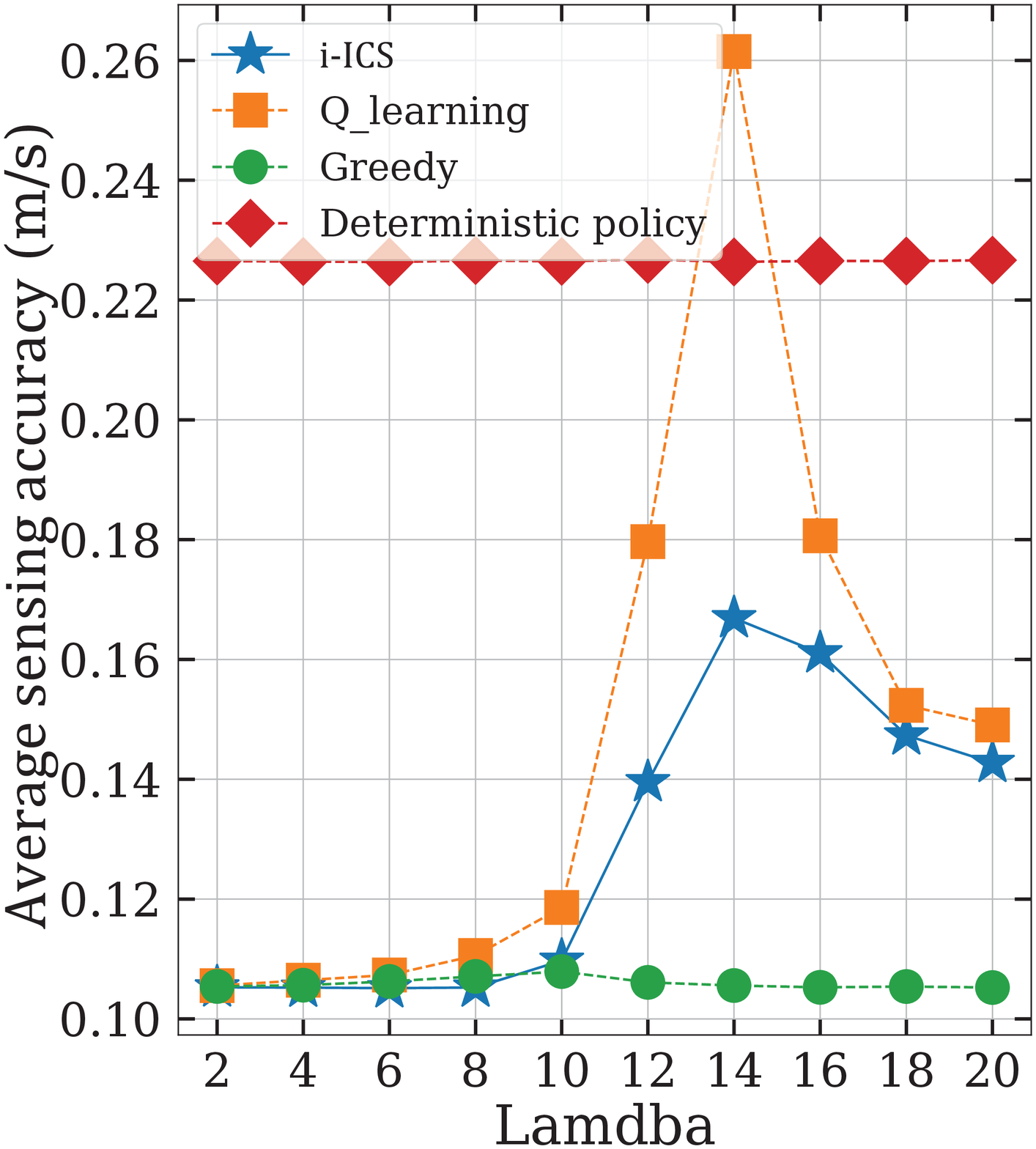}
			&\includegraphics[width=0.23\linewidth]{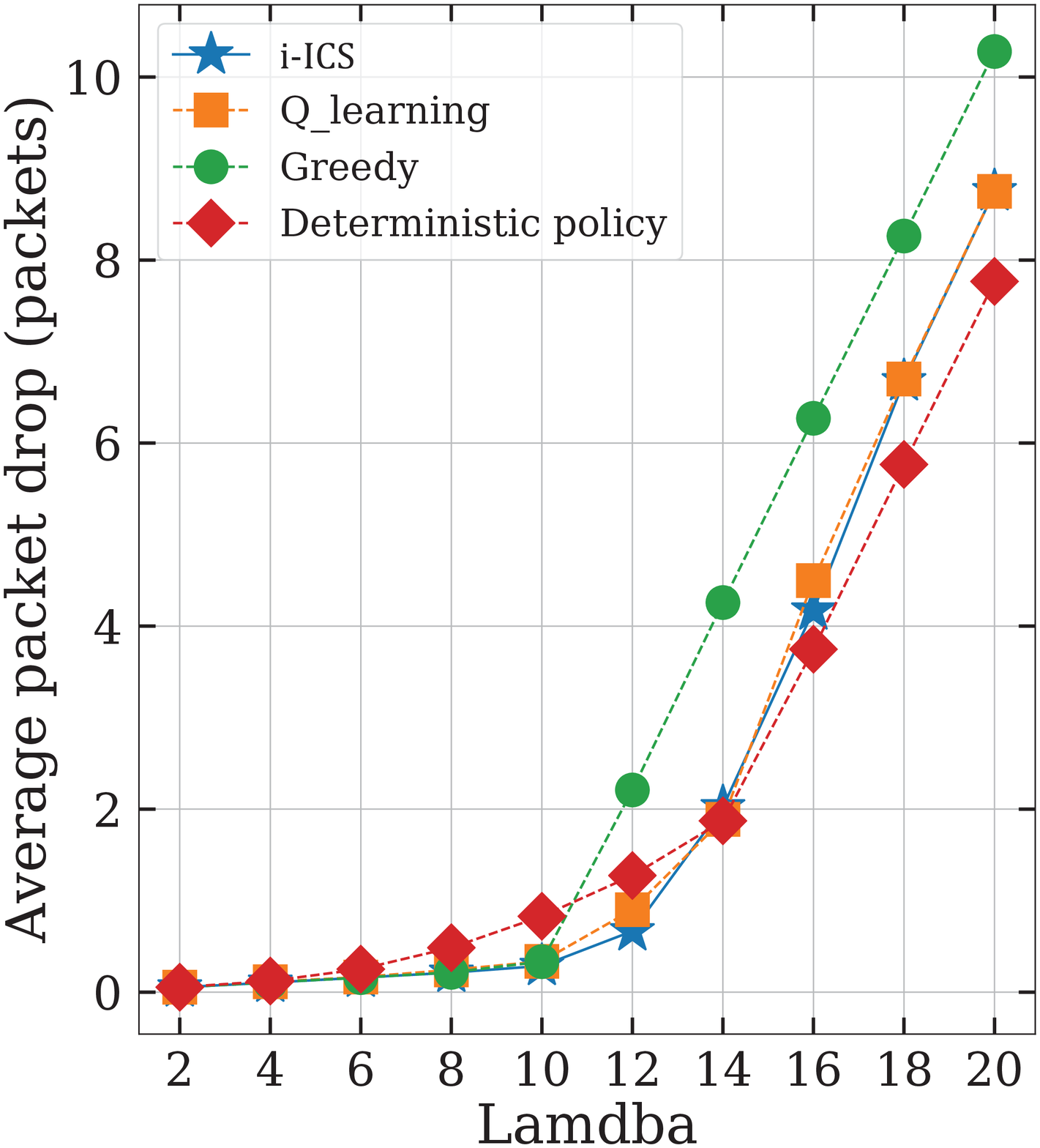}    
			\\
			{\fontsize{10}{1}\text{(a) Average cost}}
			&{\fontsize{10}{1}\text{(b) Average queue length}}
			&{\fontsize{10}{1}\text{(c) Average sensing accuracy}}
			&{\fontsize{10}{1}\text{(d) Average packet drop}}
		\end{array}$
		\caption{Varying data arrival rate with normal channel quality, $w_1=0.025,w_2=0.8$, and $w_3=0.5$.}
		\label{fig:normal_w2}
	\end{figure*}
	
	\subsubsection{Convergence Rate}
	Fig.~\ref{fig:convergene} illustrates the convergence rates of our proposed algorithms for the ICS system, i.e., the Q-learning and the i-ICS.
	Here, we compare their performance in the normal channel, and the mean number of arrived packets $\lambda$ is set to $14$.
	It can be observed that the i-ICS achieves a superior result in terms of average reward compared with that of the Q-learning.
	Specifically, at the beginning of the learning process, the Q-learning and i-ICS obtain similar results.
	However, after $2$$\times$$10^4$ iterations, the i-ICS's average reward is 20\% greater than that of the Q-learning.
	Then, the i-ICS eventually converges to the optimal policy after $4.5\!\times\!10^4$ while Q-learning still struggles with a mediocre policy.
	Under the optimal policy obtained by i-ICS, the ICS-AV's average reward is stable at around $-1.75$, approximately $40\%$ higher than that of the policy learned by the Q-learning.
	\begin{figure*}[t]
		\centering
		$\begin{array}{cccc}
			\includegraphics[width=0.23\linewidth]{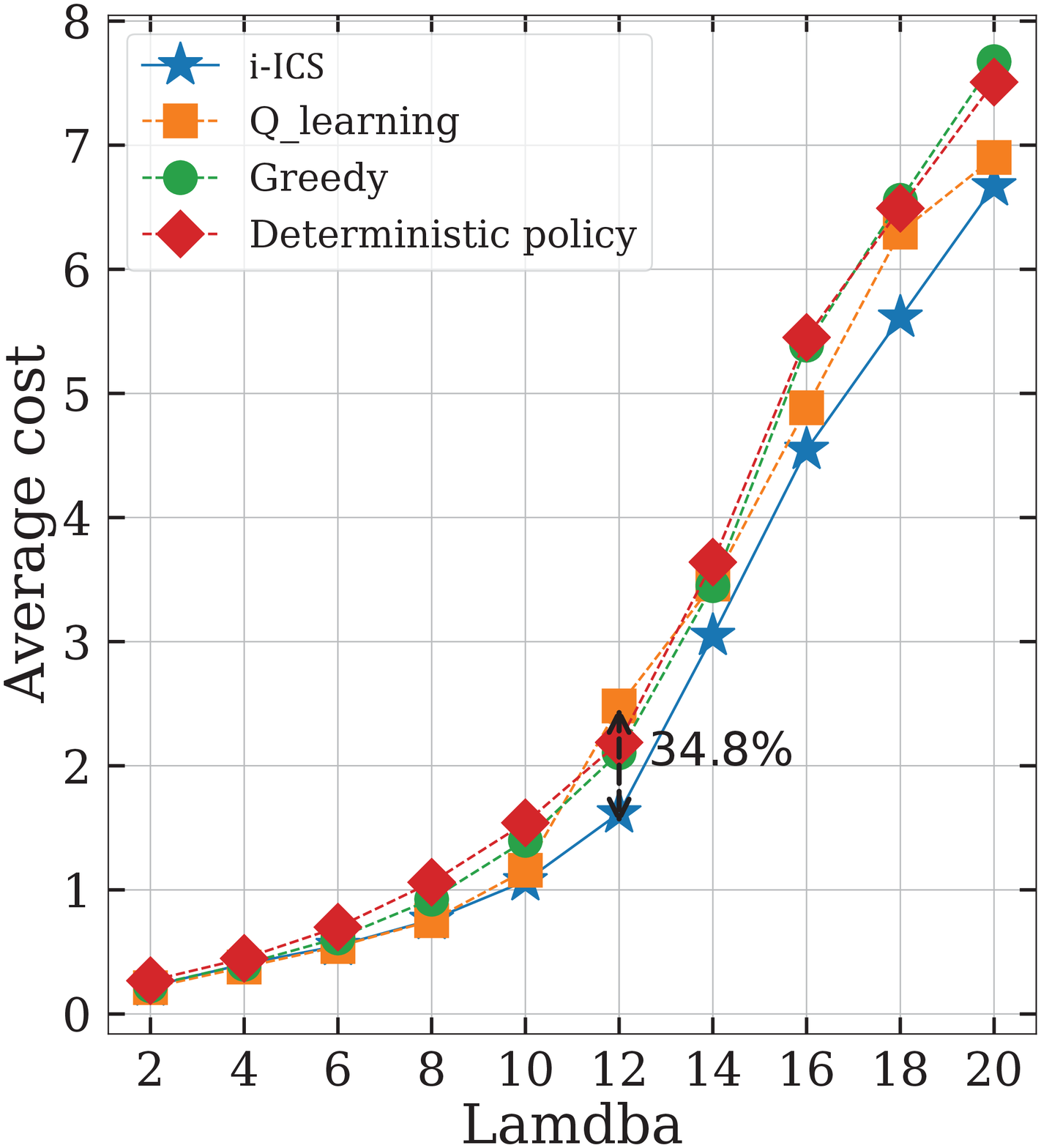}  
			&\includegraphics[width=0.23\linewidth]{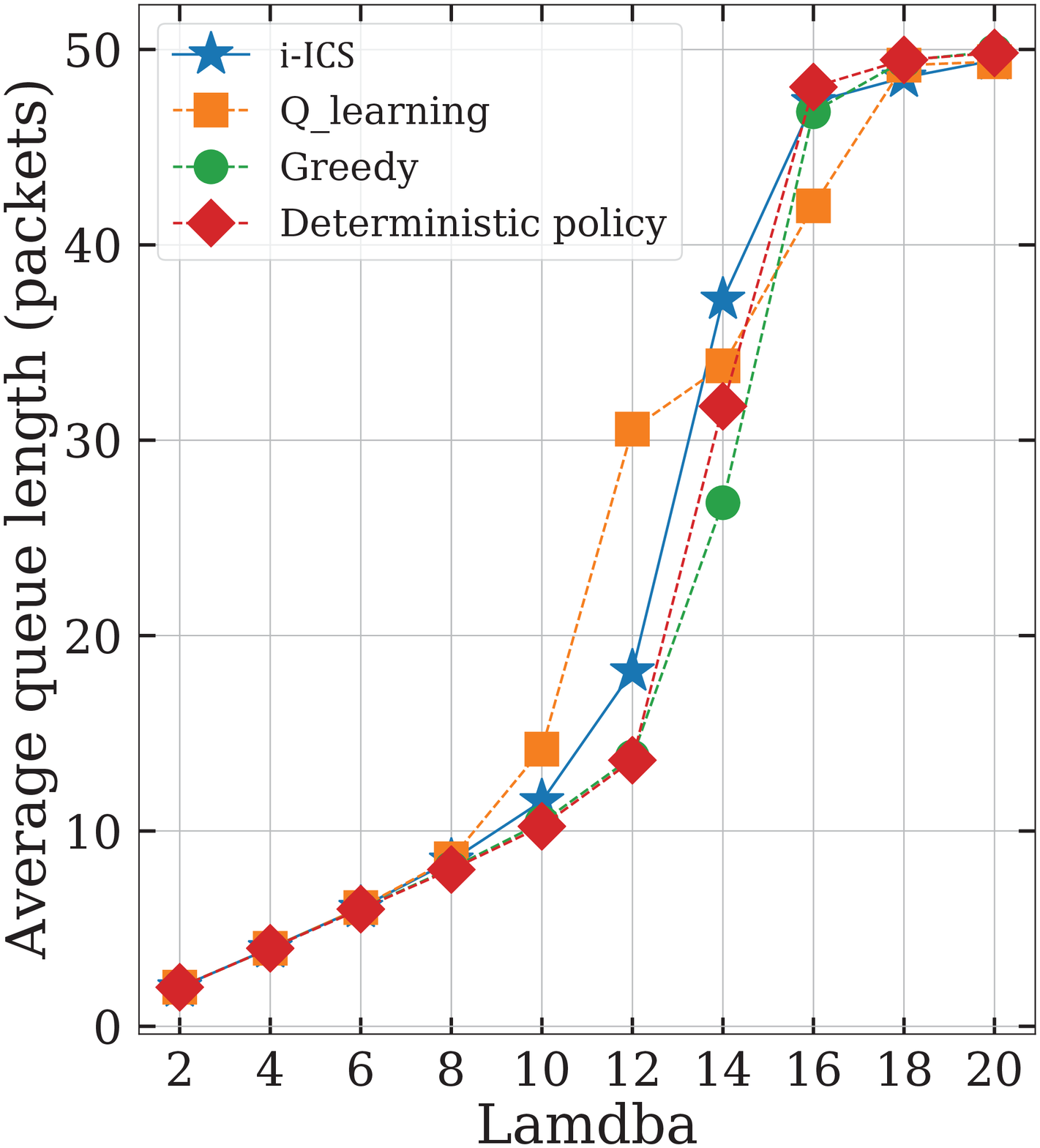}
			&\includegraphics[width=0.23\linewidth]{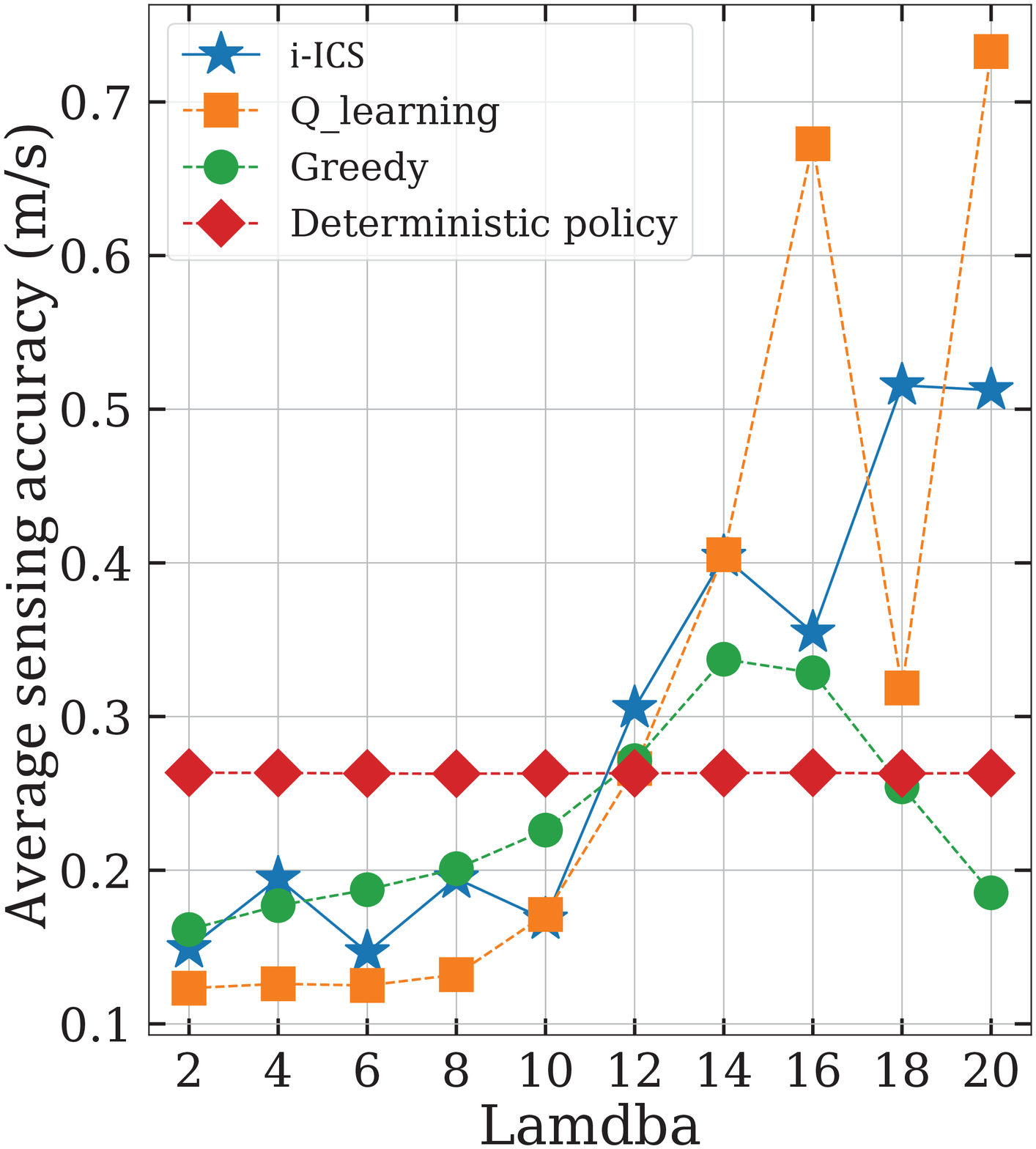}
			&\includegraphics[width=0.23\linewidth]{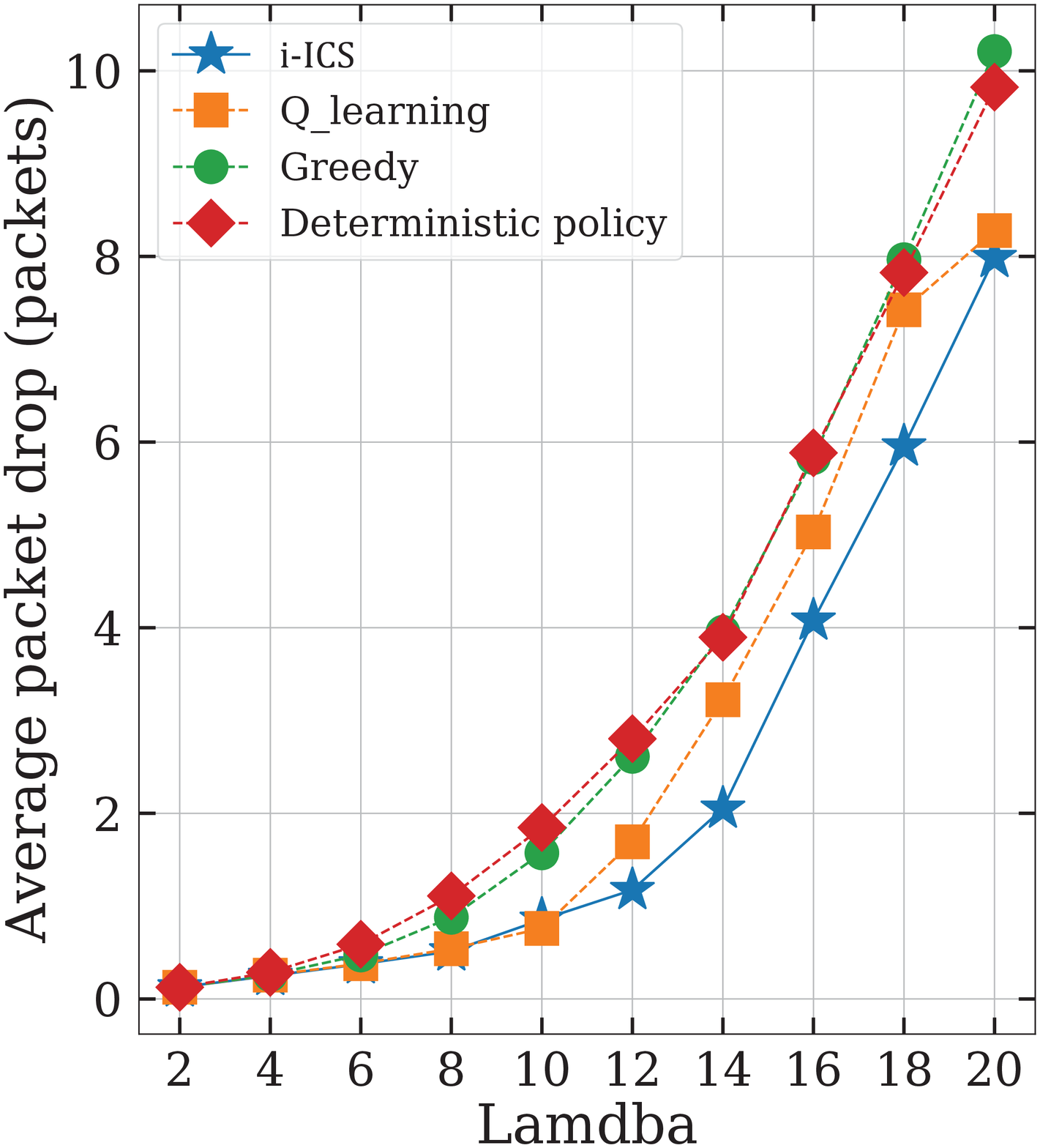}    
			\\
			{\fontsize{10}{1}\text{(a) Average cost}}
			&{\fontsize{10}{1}\text{(b) Average queue length}}
			&{\fontsize{10}{1}\text{(c) Average sensing accuracy}}
			&{\fontsize{10}{1}\text{(d) Average packet drop}}
		\end{array}$
		\caption{Varying data arrival rate with poor channel quality, $w_1=0.05,w_2=0.4$, and $w_3=0.5$.}
		\label{fig:bad_w1}
		\vspace{-10pt}
	\end{figure*}
	\begin{figure*}
		\centering
		$\begin{array}{cccc}
			\includegraphics[width=0.23\linewidth]{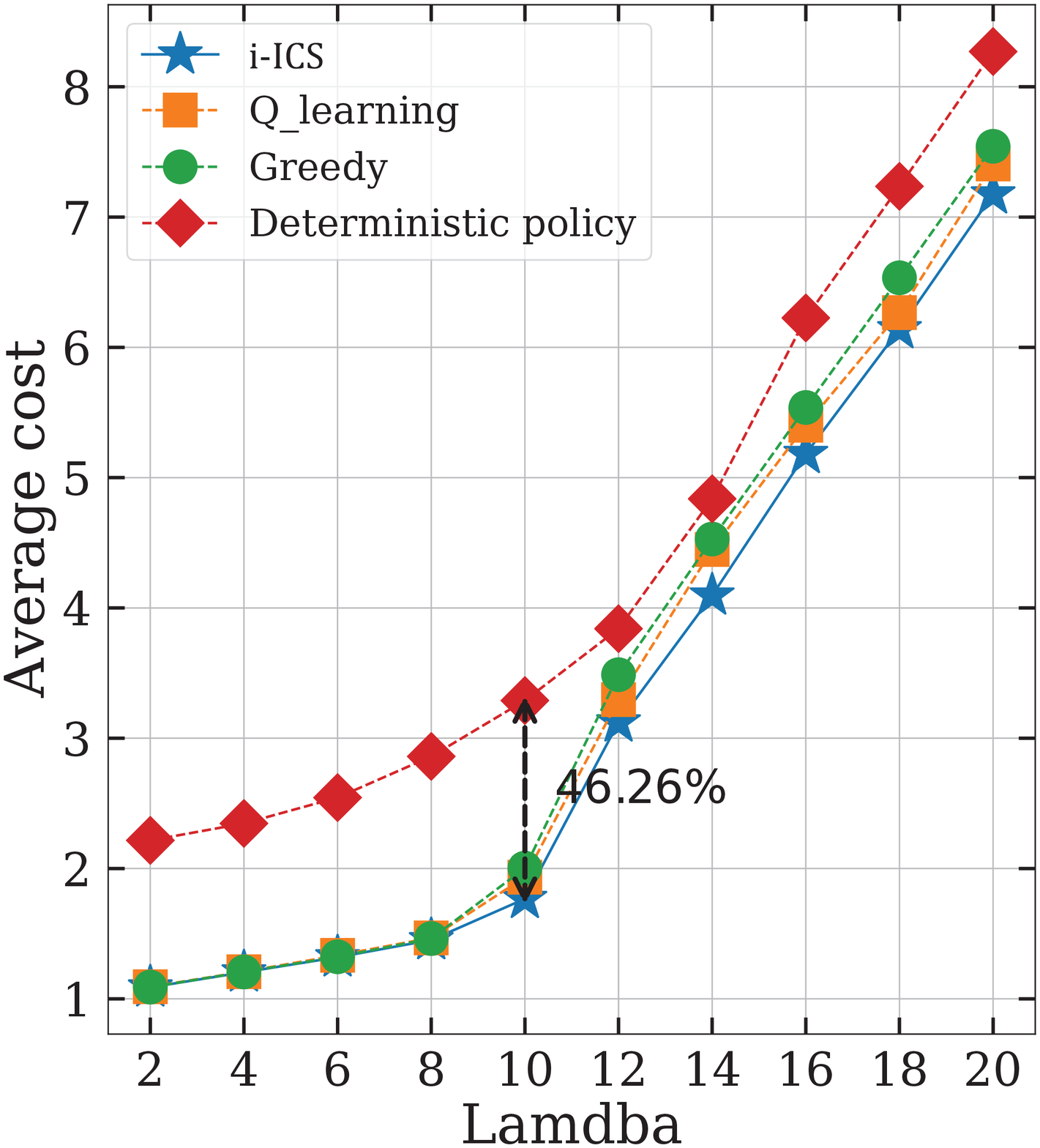}  
			&\includegraphics[width=0.23\linewidth]{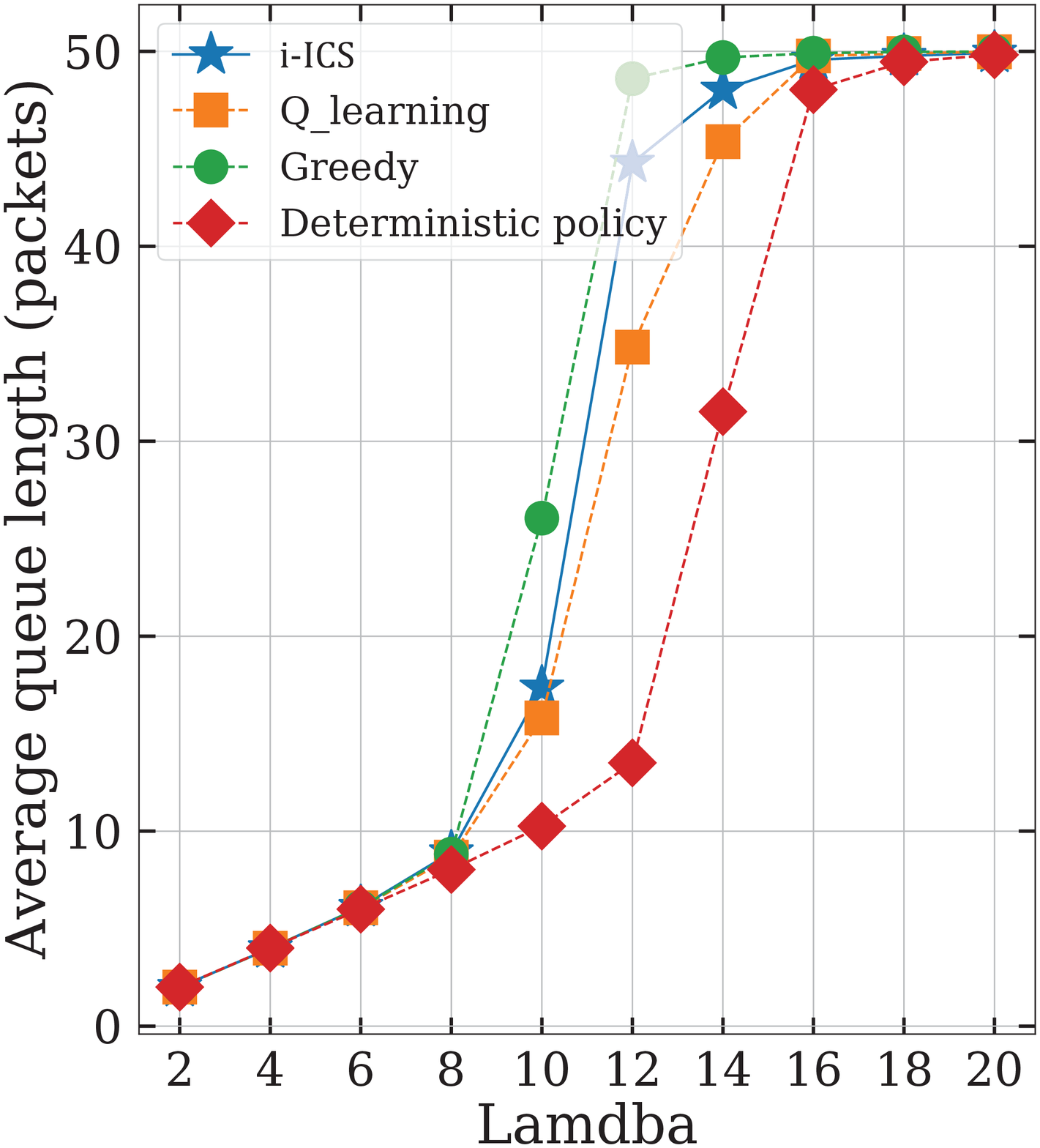}
			&\includegraphics[width=0.23\linewidth]{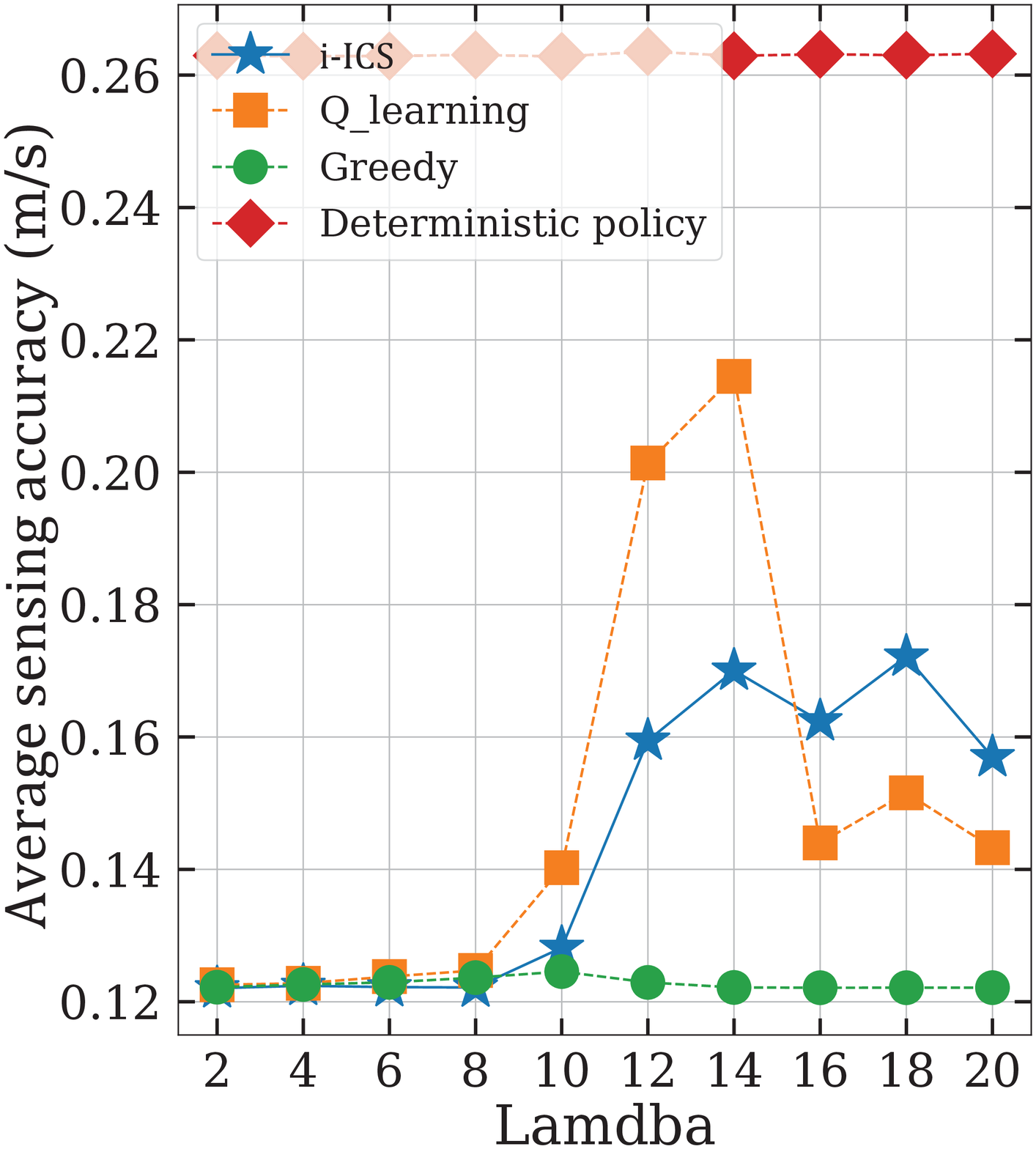}
			&\includegraphics[width=0.23\linewidth]{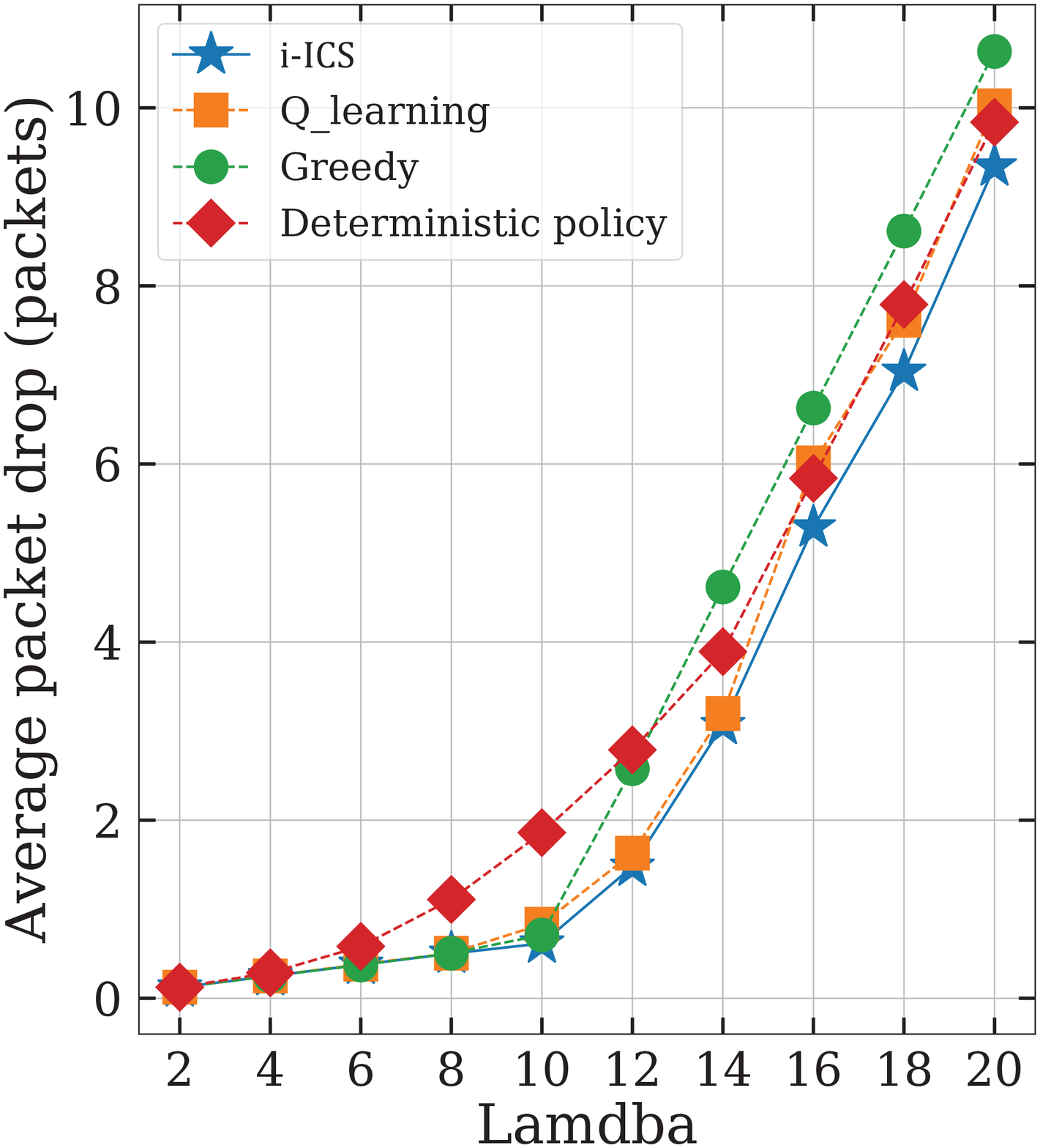}    
			\\
			{\fontsize{10}{1}\text{(a) Average cost}}
			&{\fontsize{10}{1}\text{(b) Average queue length}}
			&{\fontsize{10}{1}\text{(c) Average sensing accuracy}}
			&{\fontsize{10}{1}\text{(d) Average packet drop}}
		\end{array}$
		\caption{Varying data arrival rate with poor channel quality, $w_1=0.025,w_2=0.8$, and $w_3=0.5$.}
		\label{fig:bad_w2}
	\end{figure*}	
	\begin{figure*}[t]
		\centering
		$\begin{array}{cccc}
			\includegraphics[width=0.23\linewidth]{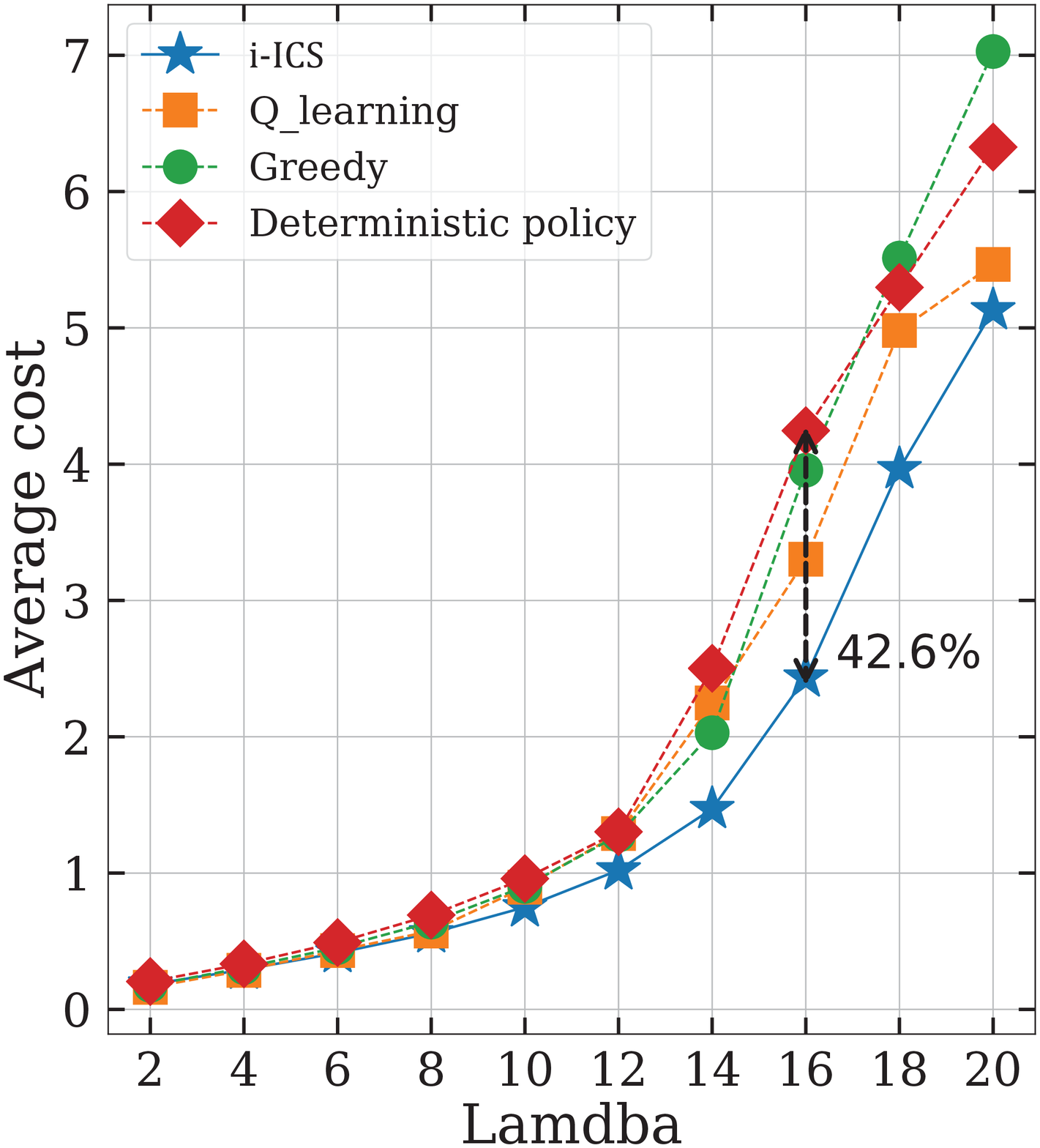}  
			&\includegraphics[width=0.23\linewidth]{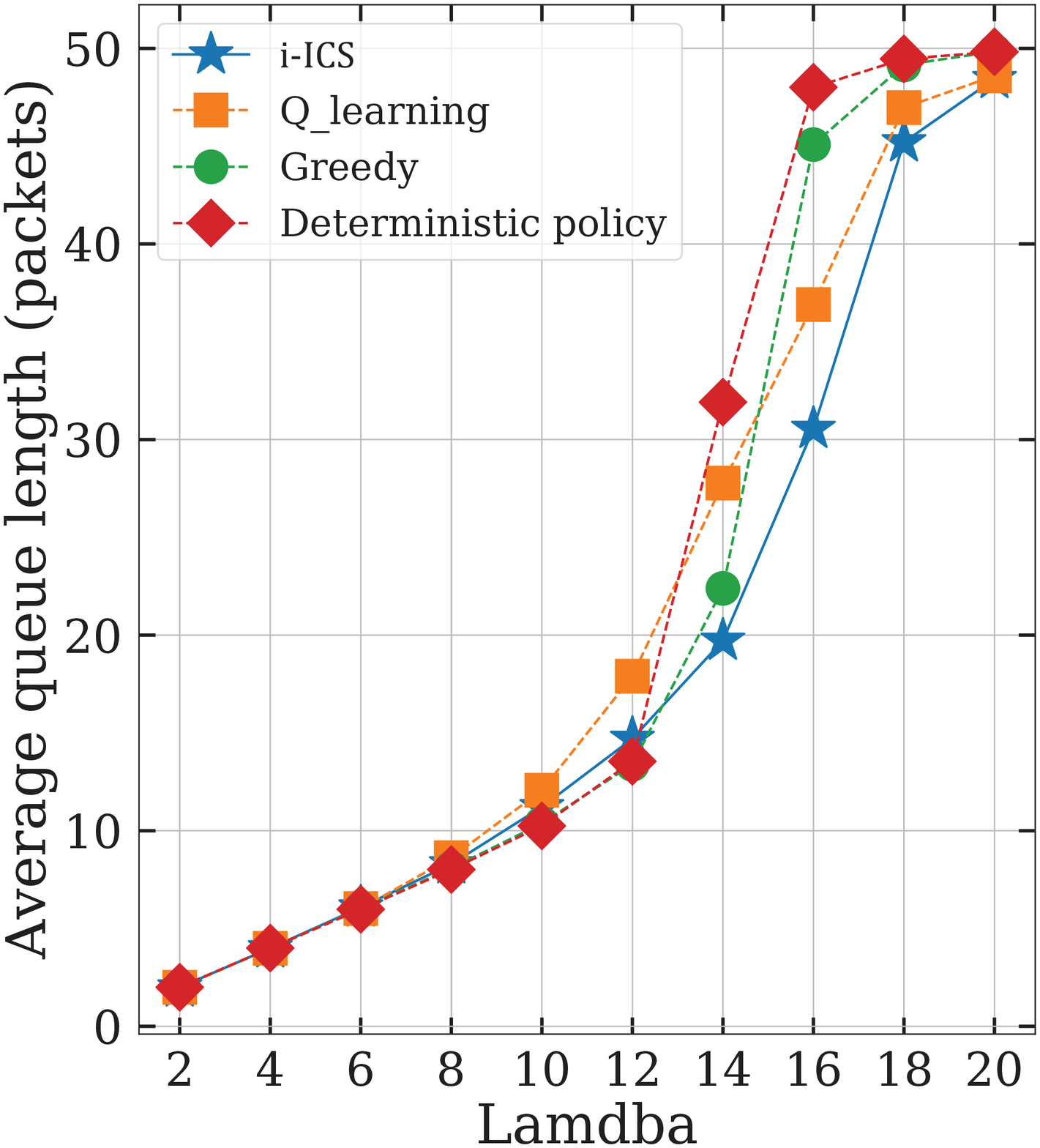}
			&\includegraphics[width=0.23\linewidth]{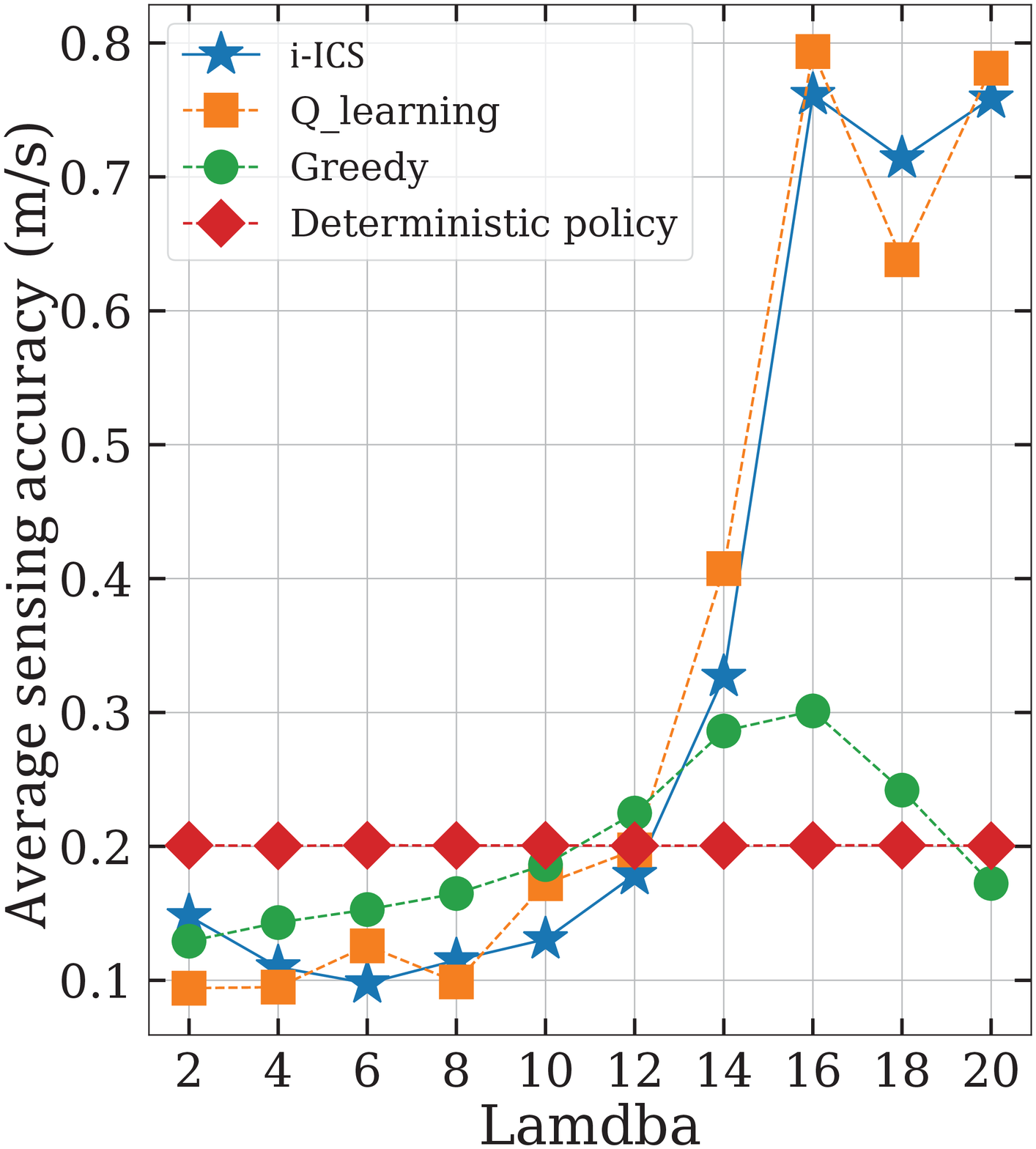}
			&\includegraphics[width=0.23\linewidth]{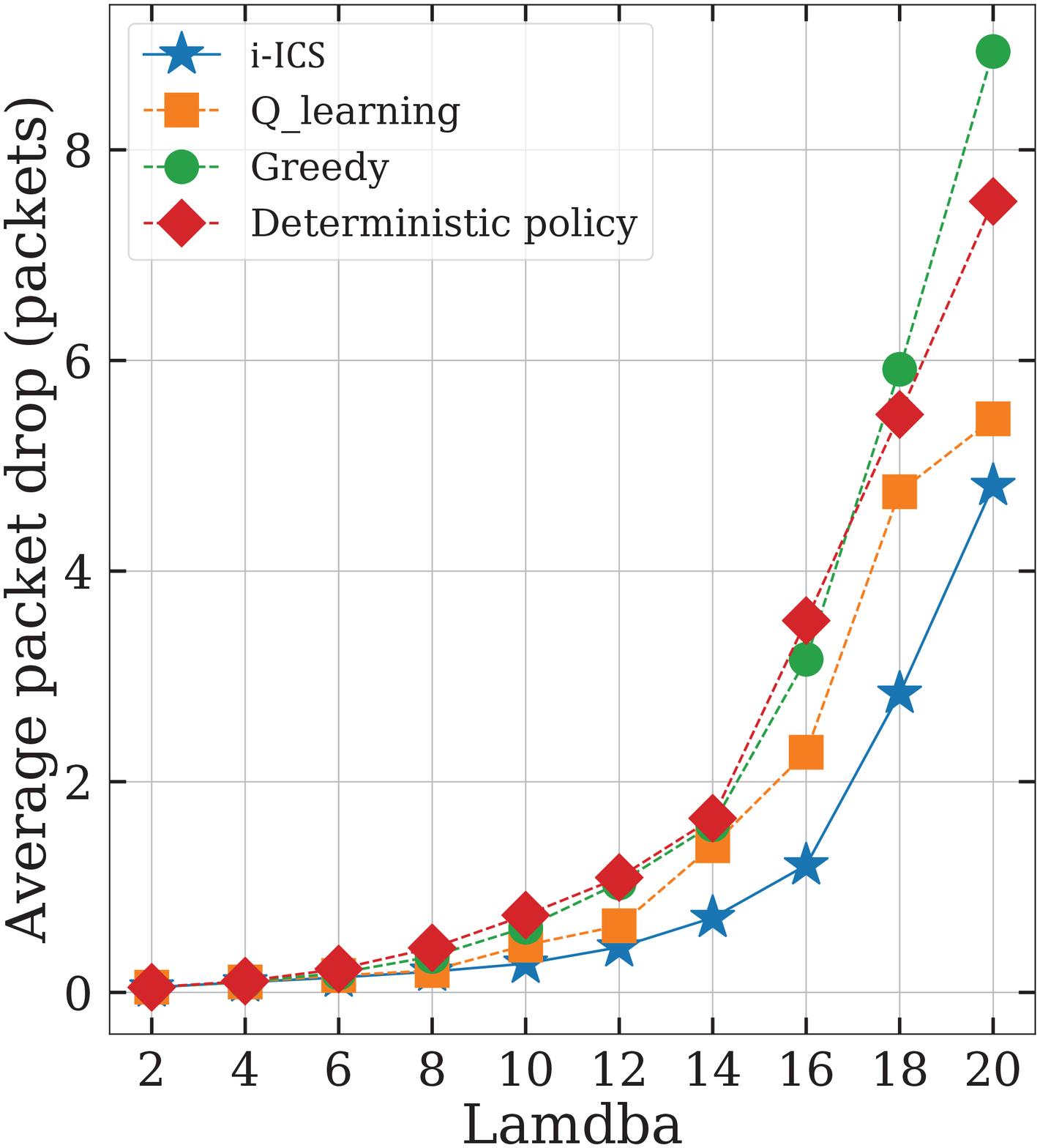}    
			\\
			{\fontsize{10}{1}\text{(a) Average cost}}
			&{\fontsize{10}{1}\text{(b) Average queue length}}
			&{\fontsize{10}{1}\text{(c) Average sensing accuracy}}
			&{\fontsize{10}{1}\text{(d) Average packet drop}}
		\end{array}$
		\caption{Varying data arrival rate with strong channel quality, $w_1=0.05,w_2=0.4$, and $w_3=0.5$.}
		\label{fig:good_w1}
	\end{figure*}
	
	\begin{figure*}[t]
		\centering
		$\begin{array}{cccc}
			\includegraphics[width=0.23\linewidth]{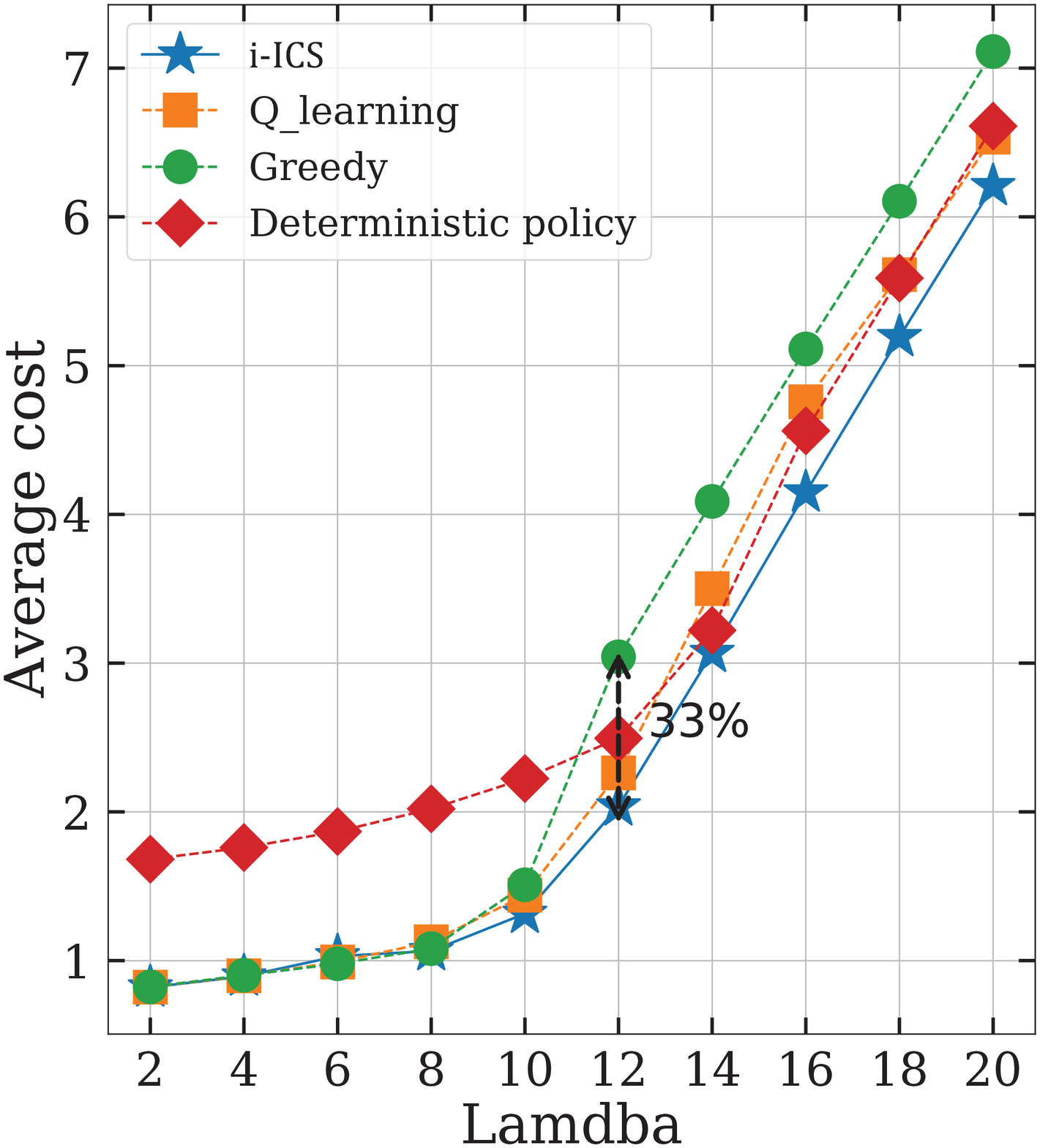}  
			&\includegraphics[width=0.23\linewidth]{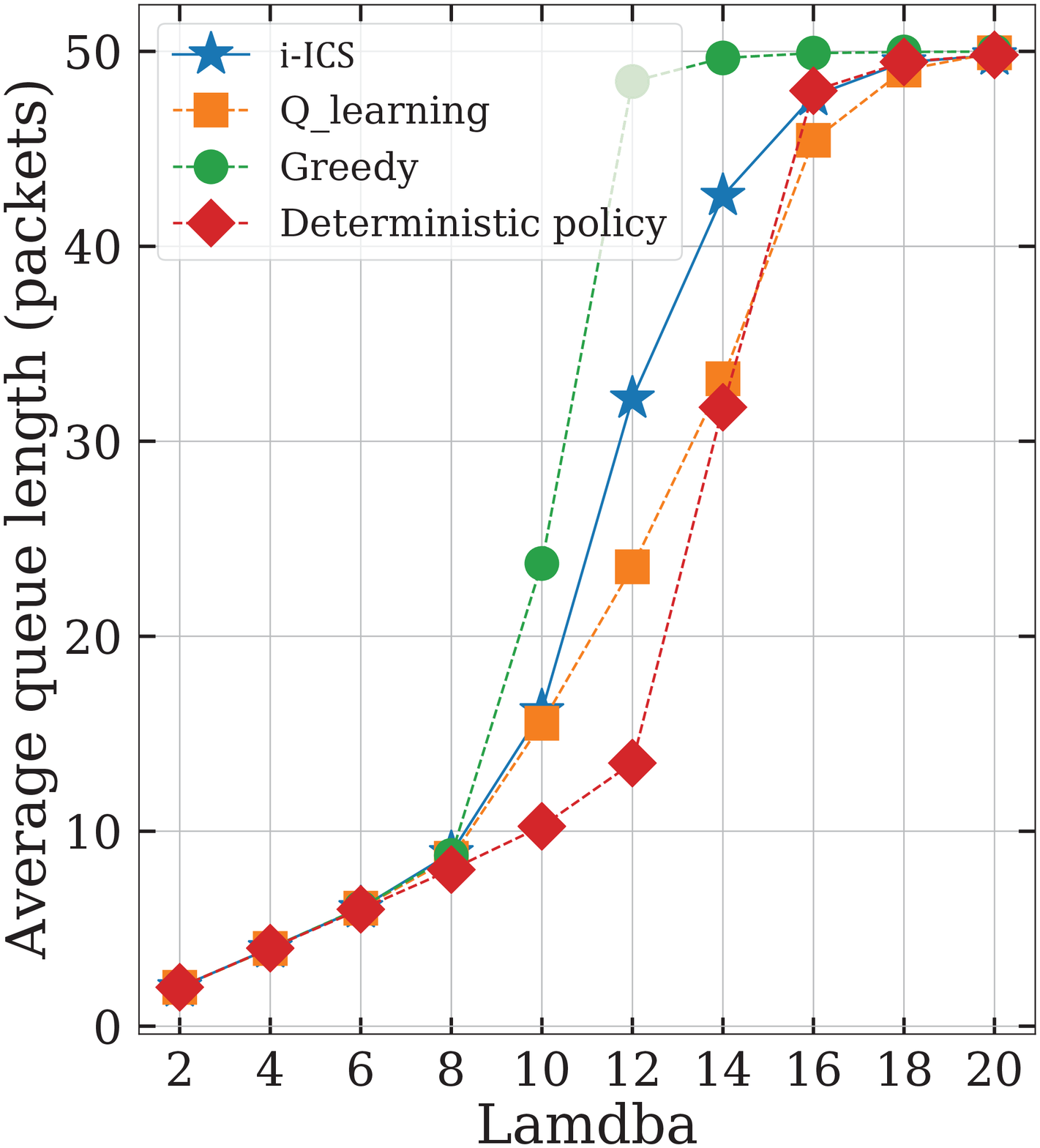}
			&\includegraphics[width=0.23\linewidth]{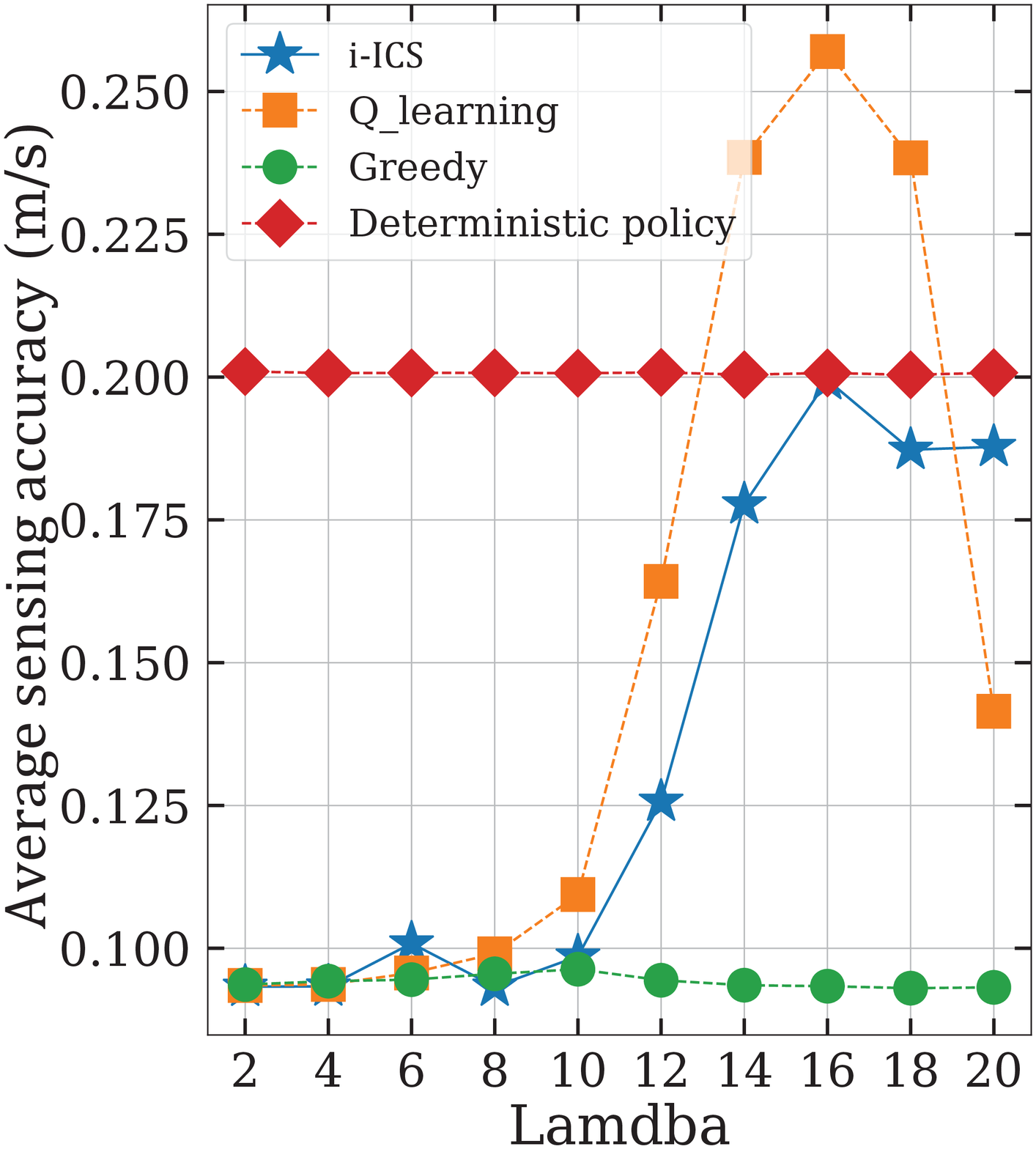}
			&\includegraphics[width=0.23\linewidth]{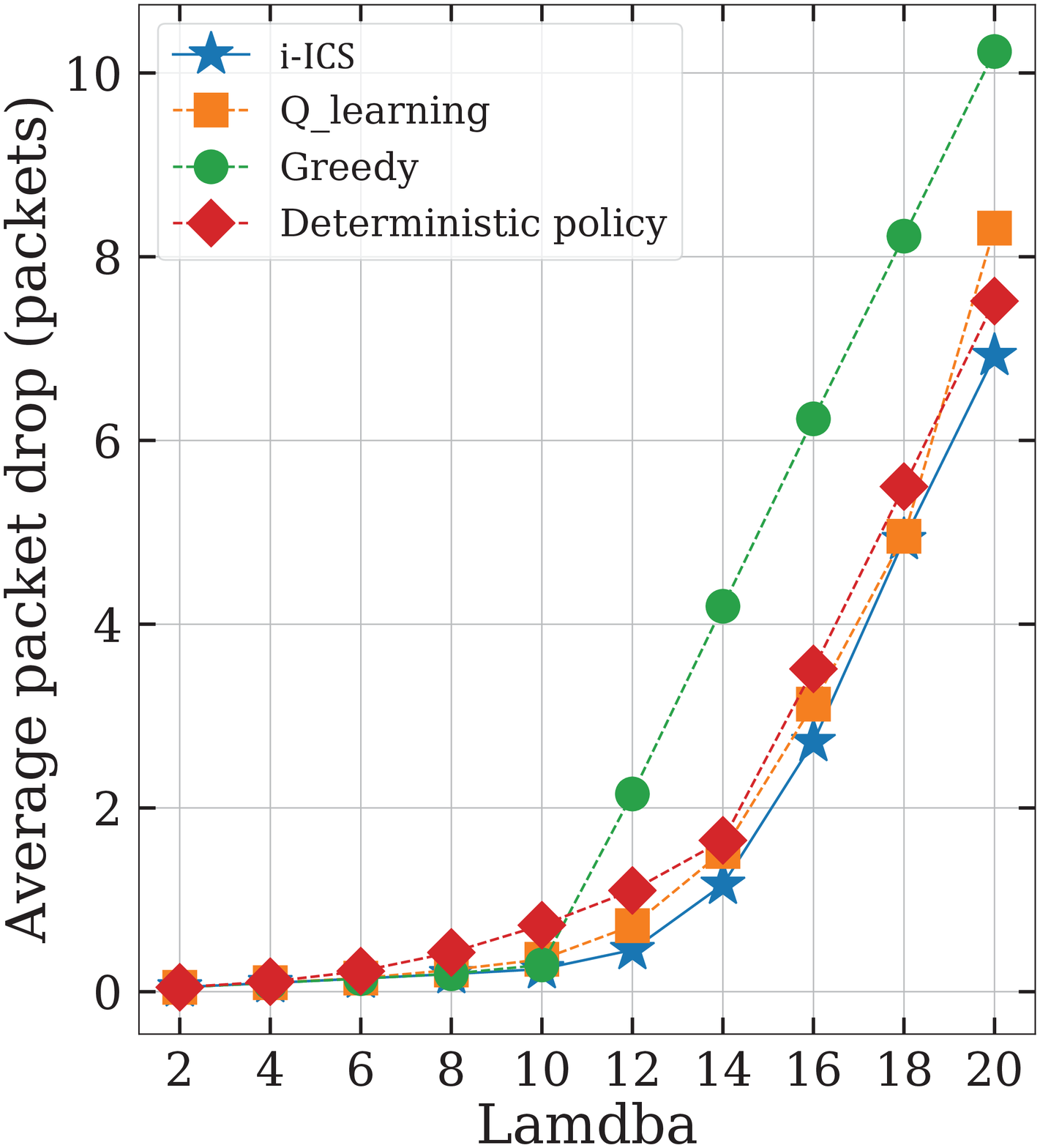}     
			\\
			{\fontsize{10}{1}\text{(a) Average cost}}
			&{\fontsize{10}{1}\text{(b) Average queue length}}
			&{\fontsize{10}{1}\text{(c) Average sensing accuracy}}
			&{\fontsize{10}{1}\text{(d) Average packet drop}}
		\end{array}$
		\caption{Varying data arrival rate with strong channel quality, $w_1\!=\!0.025,w_2\!=\!0.8$, and $w_3\!=\!0.5$.}
		\label{fig:good_w2}
		\vspace{-10pt}
	\end{figure*}
	\subsubsection{Performance Evaluation}
	We then evaluate the robustness of our proposed approach, i.e., i-ICS, by varying the mean number of packets arrived at a time slot $\lambda$ from $2$ to $20$.
	The learned policies of Q-learning and i-ICS  are obtained after $2$$\times$$10^5$ training iterations.
	To evaluate the performance of the considered ICS system, four possible metrics are the average reward, average queue length, average velocity estimation accuracy, and average packet drop.
	Recall that the average reward derived from~\eqref{eq:reward_function} indicates the joint performance of communication and sensing functions.
	Here, to make the demonstration consistent across system performance metrics (i.e., the smaller value, the better system performance), we use a cost metric that is the negation of reward.
	
	We first set the wireless channel to normal quality.
	The weights of the immediate reward function are presented by a weights vectors $\mathbf{W_1}\!=\![0.05,0.4,0.5]$, i.e., $w_1\!=\!0.05$, $w_2\!=\!0.4$, $w_3\!=\!0.5$, as shown in Fig.~\ref{fig:normal_w1}.
	Clearly, the average costs of all policies increase as the packet arrival rate increases, i.e., $\lambda$ increases from 2 to 20, as shown in Fig.~\ref{fig:normal_w1}~(a).
	It stems from the fact that given a data queue with fixed capacity and the ICS system operates under the same environment's characteristics, the higher the value of $\lambda$, the higher the number of dropped packets due to the full packet queue.
	Indeed, Figs.~\ref{fig:normal_w1} (b) and (d) clearly show that when the packet arrival rate increases, the average queue length and average packet drop increase for all policies.
	It can be observed that our proposed algorithm (i.e., i-ICS) achieves the lowest average cost, up to 64.9\% (equivalent to an decrease of 35.1\%) compared with those of other policies.
	Similarly, i-ICS has the lowest average packet drop regardless of the packet arrival rate, and it consistently maintains the average number of packets in the queue as one of the lowest values.
	
	Regarding the sensing metric, i-ICS and Q-learning achieve the highest sensing accuracy (i.e., the lowest value of average velocity estimation accuracy) when $\lambda$ is less than $12$.
	Whereas their average sensing accuracy results are not good compared to other policies if $\lambda$ is larger than $12$.
	The reasons are as follows.
	When the packet arrival rate is low (i.e., $\lambda < 12$), the average numbers of packets in the queue of all policies never pass 30\% of the data queue capacity, as shown in Fig.~\ref{fig:normal_w1}~(b).
	Thus, the ICS-AV can increase the number of frames in the CPI, meaning a decrease in the packet sent in the CPI, to achieve a higher sensing performance without worrying about packet loss due to a full queue.
	Figure~\ref{fig:normal_w1}~(c) clearly shows that the Q-learning and i-ICS can learn this strategy to obtain the best performance in terms of average sensing accuracy when $\lambda < 12$.
		
	As $\lambda$ increases from 12 to $20$, the average queue lengths of the greedy and deterministic policies quickly reach the maximum number of packets that can be stored in the data queue, i.e., $50$, as shown in Fig.~\ref{fig:normal_w1}~(b).
		This can lead to a high possibility of packet drop due to the full data queue.
		Interestingly, sensing performance of Q-learning and i-ICS policies decreases to the worst at $\lambda=16$ and  $\lambda=18$, respectively.
		Then, they manage to increase the sensing accuracy when the packet queue is mostly always full at $\lambda=20$.
		The reason is that the data transmission efficiency is unable to be improved because of a very high packet arrival rate that the system cannot handle.
		Thus, it might be better to improve the sensing accuracy instead of communication efficiency.  
		On the other hand, since the greedy and deterministic policies do not care about the uncertainty of the environment (e.g., the packet drop possibility), they can maintain better sensing accuracy when the packet arrival rate is high.
		However, their transmission efficiency is very low.
		As can be seen in Fig.~\ref{fig:normal_w1}~(d), the average numbers of packet drops of the greedy and deterministic policies are up to 50\% higher than that of our proposed learning algorithm, i.e., i-ICS.	  
		Thus, the proposed algorithm can help the ICS-AV to obtain an optimal policy that strikes a balance between sensing and data transmission metrics, thereby achieving the best overall system's performance compared with those of other policies. 	
		Although i-ICS and Q-learning experience a similar trend when $\lambda$ increases from 2 to 20, i-ICS consistently outperforms Q-learning.
		This stems from the fact that i-ICS can effectively address the high dimensional state in a complicated problem.
		
	Next, we investigate how the immediate reward function's weights can influence the system performance by changing the weight vector to $\mathbf{W_2}\!=\![0.025, 0.8,0.5]$ and varying the packet arrival rate.
	In Fig.~\ref{fig:normal_w2}, it can be observed that the results of deterministic policy are mostly unchanged, except the average cost result, when changing these weights because the ICS-AV's environment is still the same as the previous experiment, and this policy does not rely on the immediate function.
	As the weight of sensing metric (i.e., $w_2$) is doubled, i-ICS, Q-learning, and greedy policies achieve sensing accuracy results that are much better than those in the previous experiment.
	In addition, except for the Q-learning, they also consistently outperform the deterministic policy in terms of sensing metric. 
	In contrast, these policies' data transmission metrics (i.e., the average packet drop and average queue length) become worse than those in the first experiment, as shown in Figs.~\ref{fig:normal_w2} (b) and (d).
	The reason is that when the ratios $w_1/w_2$ and $w_3/w_2$ become smaller, the ICS system pays more attention to the sensing accuracy.
	Thus, Fig.~\ref{fig:normal_w2} clearly shows that in practice, these weights can be adjusted so that our proposed learning algorithm can obtain a policy that fulfils different requirements of a ICS system at different times. 
	Thanks to the ability to learn without requiring complete information of the surrounding environment, i-ICS still achieves the best overall performance when increasing the sensing metric's weight.
	
	We now examine the robustness of our proposed approach by letting it work under different channel qualities, i.e., poor quality with the PER probability vector $\mathbf{p}^p_c\! =\! [0.6,0.2,0.2]$ and good quality with the PER probability vector $\mathbf{p}^g_c\! =\! [0.2,0.2,0.6]$.
	For each of channel qualities, two sets of results are collected according to $\mathbf{W_1}$ and $\mathbf{W_2}$ when varying the packet arrival rate, as shown in Figs.~\ref{fig:bad_w1} to~\ref{fig:good_w2}.
	Overall, all policies' results experience similar trends as those in normal channel quality.
	It can be observed that the channel quality significantly affects the system performance.
	Specifically, the overall system performance (i.e., the average cost) increases as the channel quality changes from poor to normal and then to good regardless of the weight vector.
	The reason is that as the channel quality becomes worst, the probability of packet drop decreases, leading to a better performance of the ICS system. 
	Among the policies, i-ICS achieves the highest overall performance boost when channel quality changes from poor to good, e.g., with $\mathbf{W_1}$, i-ICS's average cost decreases up to 51.7\% while those of the greedy and deterministic policies reduce up to 41.17\%.	
	This is because our proposed approach can effectively adapt its behaviour according to the changes in its surrounding environment to improve the system performance significantly.
	
	\section{Conclusion}
	\label{sec:conclusion}	
In this paper, we have developed a novel MDP-based framework that allows an ICS-AV to automatically and adaptively decide its optimal waveform structure based on the observations to maximize the overall performance of the ICS system.
	Then, we have proposed an advanced learning algorithm, i.e., i-ICS, that can help the ICS-AV gradually learn an optimal policy through interactions with the surrounding environment without requiring complete knowledge about the environment in advance.
	As such, our proposed approach can effectively handle the environment's dynamic and uncertainty as well as the high dimensional state space problem of the underlying MDP framework. 
	The extensive simulation results have clearly shown that the proposed solution can strike a balance between communication efficiency and sensing accuracy, thereby consistently outperforming the benchmark methods in different scenarios.

	
	\ifCLASSOPTIONcaptionsoff
	\newpage
	\fi

\end{document}